\def\fskip#1{}
\newtheorem{theorem}{Theorem}
\newtheorem{assumption}{Assumption}
\newtheorem{corollary}{Corollary}
\newtheorem{definition}{Definition}
\newtheorem{example}{Example}
\newtheorem{lemma}{Lemma}
\newtheorem{remark}{Remark}
\def\1{{\bf 1}}
\newcommand{\remove}[1]{}
\def\la{\langle}
\def\ra{\rangle}
\def\argmin{\mathop{\rm argmin}}
\def\argmax{\mathop{\rm argmax}}
\begin{document}

\title{{\LARGE Online Assortment and Market Segmentation under Bertrand Competition Game with Set-Dependent Revenue Functions}\vspace{-0.3cm}}

\author{\authorblockN{S. Rasoul Etesami* \vspace{-0.3cm}}
\thanks{*Department of Industrial and Systems Engineering \& Coordinated Science Lab, University of Illinois at Urbana-Champaign, IL (etesami1@illinois.edu). This work is supported by NSF CAREER Award under Grant No. EPCN-1944403. An earlier version of this paper has appeared in CDC 2020 \cite{etesami2020dynamic}.}
}
\maketitle
\begin{abstract}
We consider an online assortment problem with $[n]:=\{1,2,\ldots,n\}$ sellers, each holding exactly one item $i\in[n]$ with initial inventory $c_i\in \mathbb{Z}_+$, and a sequence of homogeneous buyers arriving over a finite time horizon $t=1,2,\ldots,m$. There is an online platform whose goal is to offer a subset $S_t\subseteq [n]$ of sellers to the arriving buyer at time $t$ to maximize the expected revenue derived over the entire horizon while respecting the inventory constraints. Given an assortment $S_t$ at time $t$, it is assumed that the buyer will select an item from $S_t$ based on the well-known multinomial logit model, a well-justified choice model from the economic literature. In this model, the revenue obtained from selling an item $i$ at a given time $t$ critically depends on the assortment $S_t$ offered at that time and is given by the Nash equilibrium of a Bertrand game among the sellers in $S_t$. This imposes a strong dependence/externality among the offered assortments, sellers' revenues, and inventory levels. Despite that challenge, we devise a constant competitive algorithm for the online assortment problem with homogeneous buyers. It answers a question in \cite{zheng2019optimal-a} that considered the static version of the assortment problem with only one buyer and no inventory constraints. We also show that the online assortment problem with heterogeneous buyers does not admit a constant competitive algorithm. To compensate that issue, we then consider the assortment problem under an offline setting with heterogeneous buyers. Under a mild market consistency assumption, we show that the generalized Bertrand game admits a pure Nash equilibrium over general buyer-seller bipartite graphs. Finally, we develop an $O(\ln m)$-approximation algorithm for optimal market segmentation of the generalized Bertrand game which allows the platform to derive higher revenues by partitioning the market into smaller pools.       
\end{abstract}

\medskip
\section{Introduction}
Because of the rapid growth of the Internet and online markets such as Amazon, eBay, Airbnb, and Uber, the influence of online platforms on our daily decision-making is inevitable. In addition, online platforms play a major role in revenue management and supply-demand based on their pricing mechanisms. For instance, when a customer searches for a specific product (e.g., a TV) in an online marketplace (e.g., Amazon.com), the platform decides on what list of products to display to the customer. Similarly, when one is booking a room through Hotels.com or Orbitz.com, the online platform offers a list of available rooms at different hotels based on the searched data. As a result, depending on what list of items is offered, the online platform derives different revenues from potential buyers while meeting their demands using available items. In fact, based on a global survey \cite{evans2016rise}, the total market value of online platforms is growing rapidly and now exceeds a net value of 4.3 trillion U.S. dollars worldwide.   

In general, one can identify two types of online platforms in terms of their operation flexibility \cite{zheng2019optimal}. The first, known as the \emph{full control model}, allows the platform to have full control over offered assortments, prices, and the underlying supply-demand matching algorithms. As an example of the full control model, consider Uber, which not only offers rides to passengers, but also sets prices based on a revenue-maximizing optimization algorithm. However, unlike traditional firms, most online platforms do not dictate specific transaction prices. Instead, the platforms only create a trading environment and let the prices and revenues be determined organically as a result of interactions between buyers and sellers. This brings us to the \emph{discriminatory control model}, in which the platform sets the stage (e.g., decides what items to display), and the prices or potential matches are determined endogenously based on sellers'/buyers' choices. For instance, Airbnb only displays a list of available rooms to the customers; the rental prices are determined by competition among the property owners. As another example, eBay determines what products to display, and the prices/revenues are determined based on customers' demands and the competition among retailers.

Motivated by a variety of such examples, in this paper, we analyze a discriminatory control model for the marketplace problem under both online and offline settings. In the online setting, we consider a model in which each seller owns an indivisible item with a \emph{known quality} and \emph{private inventory}. As each item is owned by exactly one seller, we often refer to ``items" and ``sellers" interchangeably. There is a platform that can decide on what set of sellers to display to each upcoming homogeneous buyer. That induces a competition among the displayed sellers to set their items' prices, where the optimal prices are given by the Nash equilibrium of the Bertrand game induced by the set of displayed sellers. For instance, the Bertrand game can model the situation in which the hosts on Airbnb compete for potential guests by setting prices for their properties. Given the equilibrium prices, the guest either selects a room based on the multinomial logit (MNL) model or decides to leave the market without booking any room. Therefore, we are interested in devising an efficiently computable online policy for the platform to maximize the aggregate revenue of the sellers over the entire horizon. In particular, we want our online algorithm to be competitive with respect to the best in-hindsight policy that knows the total number of buyers and the inventory levels a priori.

Unfortunately, we will show that for \emph{heterogeneous} buyers with different evaluations of items' qualities, there is no constant competitive online algorithm for the platform. To address that issue, we also consider an offline marketplace with heterogeneous buyers in which all the buyers are available in the market, i.e., they do not joining sequentially. As a result, the equilibrium prices of the sellers are now given by a so-called \emph{generalized Bertrand game} that is induced by a two-sided market in which buyers and sellers are on different sides. Here we are interested in the existence of a pure Nash equilibrium in the generalized Bertrand game that extends the existence result of \cite{zheng2019optimal} from the case of a single buyer to the multi-buyer case over general bipartite graphs. In particular, we consider the problem of optimal market segmentation over the generalized Bertrand game, where the goal is to segment the entire market into smaller pools such that the aggregate revenue obtained at the equilibrium of the submarkets improves on the revenue obtained at the equilibrium of the original market. It is worth noting that market segmentation is a practical approach for many online platforms such as Uber in which the platform decides what buyers and sellers are visible to each other so as to improve the total derived revenue \cite{banerjee2017segmenting}.

\subsection{Related Work}
The design of an online marketplace with different objectives has been studied in the past literature \cite{lin2017networked, banerjee2017segmenting,birge2019optimal,mcfadden1986choice}. However, unlike prior work \cite{birge2019optimal,banerjee2017segmenting} that uses monotone distribution functions to capture supply and demand curves, we use the well-known multinomial logit (MNL) model \cite{mcfadden1974conditional,du2016optimal,feldman2018capacitated} to capture buyers' choice behavior. That, in turn, allows us to describe buyers' demands in terms of purchase probabilities for substitutable items. Through use of such demand functions, the competition among displayed sellers is captured using the Bertrand game, which has been extensively studied in the economic literature \cite{vives1985efficiency,vives1999oligopoly} for modeling oligopolistic competition in actual markets. Perhaps our work is most related to \cite{zheng2019optimal-a} and \cite{zheng2019optimal}, in which the static version of the problem that we consider here was studied, i.e., when there is only one buyer with no inventory constraints. It was shown in \cite{zheng2019optimal-a} that the revenue function for a single-buyer, one-shot market has a nice quasi-convex structure that allows one to find the optimal assortment efficiently. In particular, the authors raised the question of designing an optimal dynamic mechanism whereby the buyers arrive and depart, and the sellers have limited capacity for products. In this paper, we answer that question by devising the first constant-factor competitive algorithm that can be implemented efficiently in real-time.

This work is also closely related to dynamic online assortment \cite{golrezaei2014real,fata2019multi,wang2017consumer,gong2019online,feng2019linear,rusmevichientong2010dynamic} and dynamic revenue management \cite{talluri2004revenue,gallego2015online,dong2009dynamic,goldberg2016asymptotic,chen2004coordinating}, whose goal, broadly speaking, is to dynamically choose assortments/prices in order to maximize the aggregate revenue obtained by selling items subject to inventory constraints. Such problems have been extensively studied under various settings, such as the stochastic demand arrival model \cite{talluri2004revenue,goldberg2016asymptotic,chen2004coordinating}, the adversarial demand model in which the sequence of buyers' types can be chosen adversarially \cite{golrezaei2014real}, and reusable items, i.e., an allocated product can be returned to the inventory after some random amount of time \cite{gong2019online,feng2019linear}. We refer to \cite{chen2012pricing} for a comprehensive survey of recent advances on dynamic pricing and inventory management problems.

On the other hand, except for a handful of results \cite{gallego2014dynamic,heese2018effects,wang2017consumer}, the effect of competition or externalities in dynamic assortment problems has not been addressed before. In fact, almost all the earlier results on dynamic online assortment (see, e.g., \cite{fata2019multi,talluri2004revenue,golrezaei2014real}) consider fixed-revenue items, meaning that the revenue obtained from selling an item is fixed and does not depend on the offered assortments. That makes linear programming (LP) methods quite amenable to use in design of  competitive online algorithms, and such LP methods have been used in several earlier research efforts \cite{feng2019linear,golrezaei2014real}. Unfortunately, the dependence of revenues on the offered assortments creates extra externalities that make the use of conventional methods, such as dual-fitting or dynamic programming formulations, more complex. To handle such externalities, in this work we take a different approach and use a novel charging argument that allows us to compare the revenue obtained from our online algorithm with that of the optimal clairvoyant online benchmark.       

It is worth mentioning that our work is also related to online learning for LPs with packing constraints \cite{badanidiyuru2013bandits,devanur2011near,gupta2016experts}. The reason is that one can use an LP with packing constraints to upper-bound the revenue of the optimal clairvoyant online algorithm. As a result, generating a feasible solution in an online fashion to such an LP whose objective value is competitive with the offline LP's will automatically deliver a competitive online algorithm for our problem. A typical approach is to dynamically learn/estimate the dual variables of the offline LP by using some black-box online learning algorithms (e.g., the \emph{multiplicative weight} update rule \cite{gupta2016experts}) and use the estimated dual variables to guide the primal solution generated by the algorithm. However, the LP formulation in our setting has exponentially many variables that are associated with all the feasible assortments. Therefore, it is not clear, without extra effort, how this approach can be used to devise a polynomial-time competitive algorithm for our online assortment problem.

Finally, the generalized Bertrand game that we consider in this paper is an extension of the single-buyer Bertrand game given in \cite{zheng2019optimal}. To the best of our knowledge, Bertrand games over general bipartite graphs and MNL demands have not been addressed in the past literature. Here we should mention that other oligopolistic competitions, such as Cournot competition over general bipartite graphs, have been studied in \cite{bimpikis2019cournot,abolhassani2014network}. However, a major challenge in analyzing the generalized Bertrand game compared to the network Cournot game is the restriction of the players' strategy spaces. More precisely, in the network Cournot game, a player can control its supply to each of its buyers, while in the generalized Bertrand game, a player can only set one price, and the demands are computed endogenously. Moreover, establishing the existence of a pure Nash equilibrium in network Cournot games is typically done by reduction to potential games \cite{monderer1996potential} or submodular games \cite{topkis1979equilibrium}, a property that does not seem to hold in our generalized Bertrand game.

\subsection{Contributions and Organization}
In Section \ref{sec:formulation}, we formally define the online assortment problem, which extends the existing literature from the single-buyer static model to a dynamic setting with inventory constraints. In Section \ref{eq:preliminary}, we show that the online assortment problem with heterogeneous buyers does not admit a constant competitive algorithm. Subsequently, we restrict our attention to the online assortment problem with identical buyers. We then provide a linear programming upper bound for the optimal revenue of any online algorithm and establish some results on the substitutability of the revenue functions. In Section \ref{sec:main}, we devise a constant competitive algorithm for the online assortment problem with homogeneous buyers. The proposed algorithm is a hybrid deterministic algorithm whose analysis is based on a novel charging argument. The approach extends the existing results on online assortment problems with fixed revenues to set-dependent revenue functions. We also provide some numerical results to demonstrate the efficiency of the proposed algorithm beyond its theoretical guarantee. 

Because of the nonexistence of competitive online algorithms with heterogeneous buyers and in order to extend our results to heterogeneous buyers, in Section \ref{sec-offline}, we introduce an offline generalized Bertrand game over general bipartite graphs. In the generalized game, buyers can have different evaluations of items' qualities, and sellers have a limited number of items. Despite the highly nonlinear structure of the sellers' payoff functions, we show that under a mild market consistency assumption, the generalized Bertrand game admits a pure Nash equilibrium. We also provide an approximation algorithm for the optimal segmentation of the generalized Bertrand game, which allows the offline platform to derive higher revenues by partitioning the market into smaller pools. We conclude the paper in Section \ref{sec:conclusion}, and relegate omitted proofs and numerical results to Appendix I and Appendix II, respectively.

\section{Problem Formulation}\label{sec:formulation}
        
In this section, we introduce the online assortment problem. We will postpone the offline setting to Section \ref{sec-offline}. We consider an online assortment problem with a set $[n]=\{1,\ldots,n\}$ of sellers, where the $i$th seller holds item $i\in[n]$ with an initial inventory $c_i\in \mathbb{Z}_+$. Here $c_i$ and, more generally, the inventory state of seller $i$ is private information that is known only to seller $i$ and not to any other seller or the platform. As noted earlier, as each item is assigned to exactly one seller, throughout this paper we use the terms \emph{sellers} and \emph{items} interchangeably. Each item $i\in [n]$ has a fixed quality $\theta_i\in \mathbb{R}$, which is public information and known to all the buyers, sellers, and the platform.\footnote{The fact that items' qualities are public information is not a strong assumption and holds in many practical settings. For instance, the quality of hotel rooms is specified by an integer number of stars, or the quality of products is available based on customers' reviews on a scale of $0$ to $5$. More generally, there are various methods to quantify service qualities, such as SERVQUAL, which is a multidimensional scale developed to assess customer perceptions of service quality in retail businesses.} Without loss of generality, we assume that the items are sorted according to their qualities, i.e., $\theta_1\ge \ldots\ge \theta _n$. We also define a ``no-purchase" item $\{0\}$ with $\theta_0=0$ to represent the case in which a buyer decides to leave the market without making a purchase.  We should mention that unlike \cite{zheng2019optimal}, which only considers nonnegative qualities, we allow negative qualities. Such an extension creates a more realistic market in which a negative-quality item captures the case in which a buyer strongly prefers to leave the market rather than purchase that item.

\subsection{Sellers' Equilibrium Prices, Demands, and Revenues}
We consider a sequence of \emph{homogeneous} buyers that arrive at discrete time instances $t=1,2,\ldots,m$, for some unknown integer $m\in \mathbb{Z}_+$. As in \cite{zheng2019optimal} and \cite{mcfadden1986choice}, we adopt the random utility model for the purchase probability of the buyers. In the random utility model, the buyer has a private random preference $\zeta_i$ about the $i$th item. Given a vector of nonnegative prices $\bar{\boldsymbol{p}}=(\bar{p}_1,\ldots,\bar{p}_n)$ on the items,\footnote{By convention, we assume that the price of no-purchase item $\{0\}$ is normalized to $p_0=0$.} the buyer derives utility $u_i=\zeta_i+\theta_i-\bar{p}_i$ from purchasing item $i\in [n]$.  Therefore, the buyer purchases item $i$ with the highest utility, i.e., $i=\argmax_{j\in [n]\cup\{0\}}u_j$, where ties are broken arbitrarily. In particular, under the assumption that $\zeta_i$ are i.i.d. random variables with a Gumbel distribution, we obtain the well-known multinomial logit (MNL) purchase probabilities \cite{mcfadden1974conditional,du2016optimal,feldman2018capacitated}. Thus, for every $i\in [n]\cup\{0\}$ we have
\begin{align}\label{eq:MNL-probability}
\bar{q}_i(\bar{\boldsymbol{p}}):=\mathbb{P}(i=\argmax_{j\in [n]\cup\{0\}}u_j)=\frac{e^{\theta_i-\bar{p}_i}}{\sum_{j\in [n]\cup\{0\}}e^{\theta_j-\bar{p}_j}}.
\end{align} 
The probability $\bar{q}_i(\bar{\boldsymbol{p}})$ can also be viewed as the \emph{expected demand} of a buyer for item $i$ given posted prices $\bar{\boldsymbol{p}}$.  

In this paper, we consider a discriminatory control model wherein, upon the arrival of a buyer, the online platform can decide on what subset of items to display to the buyer, but has no control over the posted prices, which are determined endogenously by competition among the displayed sellers. To capture the competition among the sellers, we use a Bertrand competition game \cite{zheng2019optimal,vives1999oligopoly,vives1985efficiency}, wherein the seller of each item sets a price to maximize its own revenue. More precisely, let $S\subseteq [n]$ be the set of items that are displayed by the online platform to a specific buyer. Consider a Bertrand game between the sellers (players) in $S$, where an action for seller $i$ is to set a price $\bar{p}_i\ge 0$ on its item. Given the profile of prices $\bar{\boldsymbol{p}}(S)=(\bar{p}_i, i\in S)$, the revenue of seller $i\in S$ is given by the expected amount of item $i$ that is sold, based on \eqref{eq:MNL-probability}, multiplied by the price of that item, i.e., $R_i(\bar{\boldsymbol{p}}(S))=\bar{p}_i(S) \times \bar{q}_i(\bar{\boldsymbol{p}}(S))$. Assuming that in the induced Bertrand game $(S, \bar{\boldsymbol{p}}(S), \boldsymbol{R}(\bar{\boldsymbol{p}}(S)))$, each seller maximizes its own revenue, the \emph{unique} pure Nash equilibrium prices can be computed in a closed form as \cite[Theorems 1 \& 2]{zheng2019optimal-a}:
\begin{align}\label{eq:p-q}
p_i(S)=\frac{1}{1-q_i(S)} \ \ \ \ i\in S,
\end{align}     
where $q_i(S)$ is the MNL demand probability \eqref{eq:MNL-probability} computed at the equilibrium prices $\boldsymbol{p}(S)$. If we substitute the probabilities from \eqref{eq:MNL-probability} into \eqref{eq:p-q} and solve for equilibrium prices, we obtain a closed-form solution for the equilibrium prices and the equilibrium demands in terms of the ``no-purchase" probability $q_0(S)$ as \cite[Theorem 2]{zheng2019optimal-a}:
\begin{align}\label{eq:p-q_0}
q_i(S)=V\big(q_0(S) e^{\theta_i-1}\big), \  \ \ \  i\in S.
\end{align}  
Here, $V(x):\mathbb{R}_+\to [0,1)$ is a strictly increasing and concave function given by the unique solution of $y\exp({\frac{y}{1-y}})=x$, and $q_0(S)$ is given by the unique solution of the equation 
\begin{align}\label{eq:q_0}
\sum_{i\in S}V(q_0(S)e^{\theta_i-1})=1-q_0(S). 
\end{align}
Thus, by offering assortment $S$, the expected revenue that seller $i$ derives at the Nash equilibrium equals to
\begin{align}\label{eq:rev-q}
R_i(S)=\begin{cases}\frac{q_i(S)}{1-q_i(S)}, &\mbox{if}  \ \ \  i\in S,\\
0&\mbox{if}  \ \ \  i\notin S,
\end{cases}
\end{align} 
where $q_i(S)$ is given by \eqref{eq:p-q_0}. Finally, we let
\begin{align}\nonumber
R(S)=\sum_{i\in [n]}R_i(S)=\sum_{i\in S}\frac{q_i(S)}{1-q_i(S)}
\end{align}
be the expected revenue obtained by offering assortment $S$. 

\begin{example}
A closer look at \eqref{eq:rev-q} reveals that the revenue for an item depends on the offered assortment. As we shall see, that makes the analysis of assortment dynamics much more complicated than the conventional models in which items have \emph{fixed} revenues, regardless of the offered assortments. Such dependency creates externalities between items, as the revenue of an item now depends on other items that are bundled with it. To illustrate such dependency, let us consider an example with two items of qualities $\theta_1=1, \theta_2=2$. If the platform offers assortment $S=\{1\}$, which includes the first item only, then by solving \eqref{eq:q_0} we get $q_0(\{1\})=0.64$, and thus $q_1(\{1\})=1-q_0(\{1\})=0.36$. In particular, the equilibrium price for item $1$ in this case is given by $p_1(\{1\})=\frac{1}{1-0.36}=1.56$. Therefore, from \eqref{eq:rev-q}, the expected revenue of item $1$ is equal to $R_1(\{1\})=0.36\times 1.56=0.56$. As there is only one item in the assortment, that value is also equal to the total expected revenue, i.e., $R(\{1\})=0.56$. Similarly, if the platform offers assortment $S=\{2\}$,  then by solving \eqref{eq:q_0} we get $q_0(\{2\})=0.5$,  and thus $q_2(\{2\})=0.5$, $p_2(\{2\})=2, R_2(\{2\})=R(\{2\})=1$.
  
Now consider the case that in which platform offers both items, i.e., $S=\{1,2\}$. Then, by solving \eqref{eq:q_0}, we get $q_0(\{1,2\})=0.34$. Using \eqref{eq:p-q_0} we have $q_1(\{1,2\})=0.23$ and $q_2(\{1,2\})=0.43$. Therefore, in this case, the equilibrium price for item $1$ equals $p_1(\{1,2\})=\frac{1}{1-0.23}=1.29$, which is lower than $p_1(\{1\})=1.56$ in the previous case. The reason is that offering a larger assortment increases the competition among the sellers so that they choose to set their prices lower in order to attract more demand. In particular,  by offering $S=\{1,2\}$, the expected revenues for items $1$ and $2$ are $R_1(\{1,2\})=0.29$ and $R_2(\{1,2\})=0.75$, respectively,  which are less than their corresponding values $R_1(\{1\})=0.56$ and $R_2(\{2\})=1$. Note, however, that the total revenue of offering assortment $S=\{1,2\}$ is $R(\{1,2\})=1.04$, which is much greater than $R(\{1\})=0.56$.  The reason is that item $2$ is a better item whose addition to the assortment can extract more revenue. Similarly, $R(\{1,2\})=1.04$ is also greater than $R(\{2\})=1$, because including more items in the assortment creates positive externality in terms of attracting higher demand.
\end{example}

As we mentioned earlier, each seller has full information about its inventory level at any time.  However, a seller does not know anything about the inventory state of other sellers as it is private information that sellers do not wish to share with their competitors.  In particular, a seller cannot use the information on others' inventories to compute its equilibrium price. Note that given homogeneous buyers, there is no incentive for a risk-averse seller $i$ to set its price differently than its unit-inventory equilibrium price. The reason is that upon arrival of a \emph{unit-demand} buyer,  if seller $i$ assumes that others have higher inventories (and so have the incentive to offer their items at a lower price), he will also offer his item at a lower price to be competitive with others. However, that reduces the expected stage revenue of seller $i$.  As the revenue of seller $i$ is additive over the sequence of buyers and the buyers are homogeneous,  seller $i$ can focus on maximizing its expected revenue at any single stage, assuming that others choose their unit-inventory equilibrium prices. Hence,  by definition of Nash equilibrium, the revenue-maximizing price for seller $i$ at any stage is the unit-inventory equilibrium price. On the other hand, the reason why the platform should design its algorithm based on the equilibrium prices of sellers is that the Bertrand game induces a potential game \cite{monderer1996potential}. Consequently, if each seller sets its price most naturally by best responding to others' posted prices (i.e., setting to $\argmax_{p_i} R_i(\boldsymbol{p}(S))$), then the prices converge quickly to the equilibrium of the Bertrand game through the repetition of interactions. Thus, equilibrium prices can be viewed as stationary prices in the steady-state game.\footnote{For instance, in the case of Airbnb, the platform has access to prices at the time when it chooses an assortment \cite{airbab}. That leads to a repeated game in which Airbnb chooses an assortment based on current prices, and that, over time, induces a change in prices as the hosts optimize.}

\begin{remark}\label{rem:potential}
For an assortment $S$ and price profile $\boldsymbol{p}=\boldsymbol{p}(S)$, the function 
\begin{align}\nonumber
\Phi(\boldsymbol{p})=\frac{\prod_{j\in S}p_je^{\theta_j-p_j}}{\sum_{j\in S\cup\{0\}}e^{\theta_j-p_j}},
\end{align}
serves as a potential function for the single-buyer Bertrand game. This function is simply the (normalized) sum of the sellers' revenues in the logarithm space. More precisely, given price profile $\boldsymbol{p}=(p_i,p_{-i})$, we have $\ln \Phi(\boldsymbol{p}) \propto \sum_{j\in S}\ln r_j(\boldsymbol{p})$, where $r_j(\boldsymbol{p})=p_jq_j$ denotes the revenue of seller $j$ for posted price $p_j$, and the corresponding demand $q_j$ that is calculated according to \eqref{eq:MNL-probability}. In particular,  for any other price profile $\boldsymbol{p}'=(p'_i,p_{-i})$, the potential difference $\ln \Phi(\boldsymbol{p})-\ln \Phi(\boldsymbol{p}')$ equals
\begin{align}\nonumber
\ln \Phi(\boldsymbol{p})\!-\!\ln \Phi(\boldsymbol{p}')\!=\!\ln\!\Big(\frac{p_ie^{\theta_i-p_i}}{p'_ie^{\theta_i-p'_i}}\!\Big)\!+\!\ln\!\Big(\frac{e^{\theta_i-p'_i}\!+\!\sum_{j\in S\cup\{0\}\!\setminus\!\{i\}}e^{\theta_j-p_j}}{e^{\theta_i-p_i}\!+\!\sum_{j\in S\cup\{0\}\!\setminus\!\{i\}}e^{\theta_j-p_j}}\!\Big)\!=\!\ln r_i(\boldsymbol{p})\!-\!\ln r_i(\boldsymbol{p}').
\end{align}
That shows that any seller's best response increases the potential function, and the prices will converge to a pure Nash equilibrium \cite{monderer1996potential,zheng2019optimal}.  Thus, sellers do not need to be very intelligent to compute their equilibrium prices; simply best responding to others' prices will guarantee convergence to equilibrium prices.
\end{remark}

The platform can use different signaling mechanisms to communicate its assortment decision to the sellers. That allows sellers to be aware of their competitors and optimize their equilibrium prices.  For instance, upon the arrival of a new buyer, the platform can choose an assortment, announce it to the selected sellers, and request their prices.  Alternatively, once a buyer searches for an item, the platform can display selected items to everyone so the sellers automatically will be aware of whether they are selected into the offered assortment. Of course, the platform can itself perform the equilibrium price computation in a centralized manner (as it has access to all the necessary information, such as items' qualities and the selected assortment).  However, the advantage of using the signaling mechanism is that it allows the equilibrium prices to be computed in a decentralized manner by the sellers (Remark \ref{rem:potential}), hence substantially reducing the computational load of the platform.

\subsection{Online Optimization for the Platform}
The online assortment problem that is faced by the online platform is that of selecting a sequence of assortments subject to inventory constraints so as to maximize the aggregate expected revenue
\begin{align}\label{eq:rev-objective}
\sum_{t=1}^{m}\mathbb{E}[R(S_t)]=\sum_{t=1}^{m}\mathbb{E}\Big[\sum_{i\in S_t}\frac{q_i(S_t)}{1-q_i(S_t)}\Big],
\end{align} 
where $S_1,S_2,\ldots,S_m$ are random variables denoting the assortments offered by the platform at times $t=1,2,\ldots,m$. Here, the offered assortments must satisfy the inventory constraints, meaning that an item $i\in [n]$ can be included in $S_t$ at time $t$ only if it is still available at that time, i.e., if its $c_i$ units have not been fully sold to buyers $1,2,\ldots,t-1$. It is worth noting that the available inventory of an item at a time $t$ depends not only on the assortments offered by the platform up to that time, but also on the buyers' choice realizations up to time $t$. Therefore, an online algorithm must satisfy the inventory constraints for any realized sample path of buyers' choices. 

The platform can use either a deterministic or a randomized online algorithm, where by an \emph{online} algorithm we refer to an algorithm that does not know the total number of buyers $m$ or the initial inventory of the sellers $\boldsymbol{c}$, nor does it have access to the random choices realized by buyers. a priori.  In other words, the only information to which an online algorithm has access before it makes a decision at time $t$ is the quality of items $\boldsymbol{\theta}$ and whether or not an item is available at that time. We compare the performance guarantee of our online algorithm against the best \emph{clairvoyant online algorithm} that knows the item's qualities $\boldsymbol{\theta}$, the total number of buyers $m$, and the current and past state of inventories but not the future realization of the choices made by the buyers.  More precisely, for each future buyer and an assortment, the clairvoyant algorithm knows the probability that the buyer will purchase an item from that assortment but does not know the exact choice that the buyer will make.  In other words, the clairvoyant online algorithm knows the choice model but does not know the realization of the random choices of a future customer. To evaluate the performance of an online algorithm, we use the notion of a competitive ratio, as defined next.

\begin{definition}
Given $\alpha>0$, an online algorithm for the dynamic assortment problem is called $\alpha$-\emph{competitive} if for every instance it achieves in expectation an $\alpha$-fraction of the total revenue obtained by any clairvoyant online algorithm.
\end{definition}

\section{Preliminary Results}\label{eq:preliminary}

In this section, we prove several useful lemmas. To devise a competitive algorithm, we first derive an upper bound for the expected revenue that any clairvoyant online algorithm can obtain and use it as a benchmark to compare the performance of our online algorithm with that upper bound. This upper bound can be obtained by writing an offline LP whose optimal objective value is no less than the expected revenue of any feasible clairvoyant online algorithm. (We refer the reader to \cite{fata2019multi,golrezaei2014real,feng2019linear} for a similar method for devising online competitive algorithms.) Let us assume that the number of buyers $m$ and inventory levels $c_i, \forall i$ are known and consider an offline LP whose optimal objective value (OPT) is given by
\begin{align}\label{eq:LP-Game}
\mbox{OPT}=\max &\ \ \sum_{t=1}^m\sum_{S} R(S)y^t(S)\cr 
\mbox{s.t.}  &\ \ \sum_{t=1}^m\sum_{S} \boldsymbol{q}(S)y^t(S)\leq \boldsymbol{c},\cr 
& \ \ \boldsymbol{y}^t\in \Delta_{2^n}, \forall t=1,\ldots,m.
\end{align} 
Here, $\boldsymbol{c}=(c_1,\ldots,c_n)'$ is the vector of initial inventories, and $\boldsymbol{q}(S)=(q_1(S),\ldots,q_n(S))'$ is the column vector of equilibrium purchasing probabilities, where $q_i(S)$ is obtained from \eqref{eq:p-q_0} and is the equilibrium demand for item $i$ given that assortment $S$ is offered. Moreover, for each $t$, the variable vector $\boldsymbol{y}^t=(y^t(S): S\subseteq [n])$ belongs to the probability simplex $\Delta_{2^n}=\{\boldsymbol{y}\ge 0: \sum_{S\subseteq [n]}y(S)=1\}$, where $y^t(S)$ can be viewed as the probability that the algorithm will offer assortment $S$ to the buyer at time $t$. Thus, the first $n$ constraints in \eqref{eq:LP-Game} capture the inventory constraints in a vector form.

Next, we show that any clairvoyant online algorithm generates a feasible solution to the LP in \eqref{eq:LP-Game} with an objective value that is equal to the expected revenue of that clairvoyant algorithm.  That shows that the OPT provides an upper bound for the revenue obtained by any clairvoyant online algorithm. To see that,  for any $t=1,\ldots,m$, let $S_t$ be a random variable denoting the assortment offered by the clairvoyant online algorithm at time step $t$, and $\eta_t$ be a random variable denoting the item that is purchased by the buyer at time $t$.  Note that $S_t$ may depend on $S_1,\ldots,S_{t-1}$ and $\eta_1,\ldots,\eta_{t-1}$, such that at any time $t$, the clairvoyant online algorithm can observe the past decisions of the buyers up time $t-1$ (and hence knows the state of the inventories up to time $t-1$),  and then offers the assortment $S_t$. The only information that is not available to the clairvoyant online algorithm at the time of offering $S_t$ is the \emph{realization} of the future buyer's choices $\eta_t,\ldots,\eta_m$. Then, for every realized sample path $\sum_{t=1}^m\sum_S\boldsymbol{1}\{S_t=S, \eta_t=i\}\leq c_i$, $\forall i\in [n]$, where $\boldsymbol{1}\{\cdot\}$ is the indicator function. By taking expectation from this inequality over all sample paths, we can write
\begin{align}\nonumber
\sum_{t=1}^m\sum_S\mathbb{P}(S_t=S)q^t_i(S)&=\sum_{t=1}^m\sum_S\mathbb{P}(S_t=S)\mathbb{P}(\eta_t=i|S_t=S)\cr 
&=\sum_{t=1}^m\sum_S\mathbb{E}[\boldsymbol{1}\{S_t=S, \eta_t=i\}]\leq c_i. 
\end{align}
Thus, setting $y^t(S)=\mathbb{P}(S_t=S)$ forms a feasible solution to the LP in \eqref{eq:LP-Game} whose objective value is equal to the expected revenue of the online clairvoyant algorithm, that is, $\sum_{t=1}^m\sum_{S} R(S)\mathbb{P}(S_t=S)$.  That shows that the OPT provides an upper bound for the revenue obtained by any clairvoyant online algorithm.

We note that in the online assortment problem, we restrict our attention to identical buyers such that $R(S)$ and $\boldsymbol{q}(S)$ do not depend on $t$. The reason is that if $R^t(S), \boldsymbol{q}^t(S)$ can also depend on the type of buyers, then, as is shown in the following theorem, no online algorithm can achieve a constant competitive ratio.

\begin{theorem}
There is no constant competitive algorithm for the online assortment problem with heterogeneous buyers' evaluations.
\end{theorem}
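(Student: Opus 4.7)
The plan is to construct a parametric family of adversarial instances in which an online algorithm is forced to commit to serve or skip each arriving buyer without ever knowing whether a far more profitable buyer is about to arrive. First, I would fix a large integer $k$ and a scalar $M\gg 1$, and take $n=1$ (a single item) with unit inventory $c_1=1$ and $k$ heterogeneous buyer types; the $i$-th type values item~$1$ with a quality $\theta^{(i)}$ chosen so that the singleton-assortment MNL equilibrium in~\eqref{eq:q_0} satisfies $q_0^{(i)}(\{1\})=M^{-i}$. Such a $\theta^{(i)}$ exists uniquely because $V$ is a continuous strictly increasing bijection of $[0,\infty)$ onto $[0,1)$ (one obtains $\theta^{(i)}=M^i+i\ln M+\ln(1-M^{-i})$); by~\eqref{eq:p-q} the equilibrium price is then $p_1^{(i)}=1/q_0^{(i)}=M^i$ with per-stage expected revenue $R^{(i)}(\{1\})=M^i-1$.

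Second, I would define, for each $j\in\{1,\ldots,k\}$, instance $I_j$ to be the stream in which buyer types $1,2,\ldots,j$ arrive in this order and the market then closes. The clairvoyant online benchmark for $I_j$ knows $m=j$ and may simply offer $\emptyset$ at the first $j-1$ steps and $\{1\}$ to the final buyer, yielding $\mathrm{OPT}(I_j)\ge M^j-1$. Third, fixing any (possibly randomized) online algorithm $\mathrm{ALG}$, oblivious to $m$, I would let $\sigma_i$ denote the probability (over $\mathrm{ALG}$'s randomness and the buyers' choices) that $\mathrm{ALG}$ sells the item exactly at time~$i$. Because $\mathrm{ALG}$'s conduct on the first $i$ buyers is identical across all $I_j$ with $j\ge i$, each $\sigma_i$ is well-defined and independent of $j$, and $\sum_{i=1}^k\sigma_i\le 1$ since the single unit of inventory is sold at most once; consequently, the expected revenue of $\mathrm{ALG}$ on $I_j$ equals $\sum_{i=1}^j\sigma_i M^i$.

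Finally, if $\mathrm{ALG}$ were $\alpha$-competitive for some absolute constant $\alpha>0$, then $\sum_{i=1}^{j}\sigma_iM^i\ge\alpha(M^j-1)$ would hold for every $j\in\{1,\ldots,k\}$. Bounding $\sum_{i<j}\sigma_iM^i\le M^{j-1}\sum_{i<j}\sigma_i\le M^{j-1}$ and rearranging would give $\sigma_j\ge\alpha-\alpha M^{-j}-M^{-1}$; summing over $j=1,\ldots,k$ and invoking $\sum_j\sigma_j\le 1$ would produce $\alpha\le(1+k/M)/\bigl(k-1/(M-1)\bigr)$, after which setting $M=k^2$ and letting $k\to\infty$ would force $\alpha\to 0$, a contradiction. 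The main obstacle I anticipate is purely bookkeeping: verifying that arbitrarily large singleton equilibrium prices $M^i$ can actually be realized through heterogeneous qualities $\theta^{(i)}$, which reduces to the bijectivity of $V$ on $[0,\infty)$; once the instance family is calibrated, the heart of the argument is a single averaging inequality that pits the unit-inventory budget $\sum_i\sigma_i\le 1$ against the geometric growth of the optima across $\{I_j\}_{j=1}^{k}$.
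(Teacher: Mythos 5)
Your proof is correct, and it builds exactly the same adversarial family as the paper: a single unit-inventory item, a prefix-closed sequence of instances, and buyer evaluations calibrated so that the singleton equilibrium revenue grows geometrically as $M^i$ (the paper sets $\frac{q}{1-q}=M^t$ directly rather than fixing $q_0=M^{-i}$, but this is cosmetic). Where you diverge is in how the contradiction is extracted. The paper tracks the \emph{offer} probabilities $\beta_t$, proves by induction that $\alpha$-competitiveness forces $\beta_t\ge\alpha$ for every $t$, and then uses the fact that an offered item sells with probability at least $\tfrac12$ to show the survival probability decays like $(1-\alpha/2)^{t-1}$, which eventually makes the required $\beta_t$ exceed $1$. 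You instead work directly with the \emph{sale} probabilities $\sigma_i$, observe that they obey the budget $\sum_i\sigma_i\le 1$, show each competitiveness constraint forces $\sigma_j\ge\alpha-\alpha M^{-j}-M^{-1}$, and sum $k$ such constraints against the unit budget to get $\alpha\le(1+k/M)/(k-1/(M-1))\to 0$. Your route is somewhat more elementary: it needs no induction and no ``purchase probability at least $\tfrac12$ when offered'' calibration, and it yields an explicit quantitative bound of $\alpha=O(1/k)$ on instances of length $k$. The paper's survival-decay argument, on the other hand, makes visible \emph{why} the algorithm fails (any competitive algorithm must keep gambling the item away before the valuable buyer arrives). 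Both are valid proofs of the statement.
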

\begin{proof}
Consider a single item with unit inventory $c=1$, and, for simplicity, we drop all the indices that denote that item. Consider an adversary who first selects the number of buyers $m$ and their quality evaluations (types) $\theta_t, t\in[m]$, and then reveals them sequentially over time to the seller. Let $M>0$ be a large number, and for $t\in \mathbb{Z}_+$, we choose $\theta_t$ such that the expected revenue of selling the item to buyer $t$ equals $M^t$. Note that such a selection is possible because the expected revenue function $\frac{q}{1-q}$ is a continuously increasing function and admits any value in $[0, \infty)$. Moreover, as $M$ is a large number, for any buyer $t$, the equality $\frac{q}{1-q}=M^t$ implies that $q\ge \frac{1}{2}$. In other words, if the item is offered to any buyer, it is purchased with probability at least $\frac{1}{2}$. 

Now let $\mathcal{I}=\big\{(\theta_1), (\theta_1,\theta_2), (\theta_1,\theta_2, \ldots, \theta_t), \ldots\big\}$ be the set of input sequences that can be chosen by the adversary. To derive a contradiction, let us assume that there exists an $\alpha$-competitive (randomized) algorithm ALG for some $\alpha>0$. Then, ALG must offer the item to the first arriving buyer with probability $\beta_1\ge \alpha$. The reason is that for the unit length instance $(\theta_1)$, that is, when the adversary chooses to send only one buyer of type $\theta_1$, the expected revenue obtained by ALG is $\beta_1 M$. On the other hand, the optimal clairvoyant algorithm that knows the adversary's sequence will offer the item to the first buyer with probability one and derives an expected revenue of $M$. As the expected revenue of ALG for \emph{any} instance must be at least $\alpha$-fraction of the optimal revenue, we have $\frac{\beta_1 M}{M}\ge \alpha$.

Using induction, we next show that for any $t\ge 2$, the ALG must offer the item to buyer $t$ with some probability $\beta_t\ge \alpha$. First, we note that after revealing the first $t-1$ buyers to the ALG, all the remaining instances in $\mathcal{I}$ are indistinguishable so that the ALG cannot draw any conclusion about the input sequence. Moreover, the ALG must perform well on any input sequence. Therefore, if the adversary chooses the input instance to be $(\theta_1,\ldots,\theta_t)$, the expected revenue of ALG is at most
\begin{align}\nonumber
&\sum_{\ell=1}^{t-1}M^{\ell}+\mathbb{P}\big\{\mbox{item is available at time $t$}\big\}\beta_t M^t\cr 
&\qquad\leq \sum_{\ell=1}^{t-1}M^{\ell}\!+\mathbb{P}\big\{\mbox{item is available at $t-1$}\big\}\big(1-\frac{\beta_{t-1}}{2}\big)\beta_t M^t\cr 
&\qquad\leq \sum_{\ell=1}^{t-1}M^{\ell}+\Big(\prod_{\ell=1}^{t-1}\big(1-\frac{\beta_{\ell}}{2}\big)\Big)\beta_t M^t\cr 
&\qquad< \frac{M^t}{M-1}+\big(1-\frac{\alpha}{2}\big)^{t-1}\beta_t M^t,
\end{align}  
where the first inequality holds because the probability that the item is available at time $t$ is equal to the probability that the item is available at time $t-1$ multiplied by the sum of two conditional probabilities: i) the probability that the item is not offered at time $t-1$ given that it is available (this probability is $1-\beta_{t-1}$), and ii) the probability that the item was offered at time $t-1$ and was not sold, given that it was available (this probability is at most $\frac{\beta_{t-1}}{2}$). The last inequality holds by the induction hypothesis as $\beta_{\ell}\ge \alpha, \forall \ell\leq t-1$. On the other hand, the expected revenue of the optimal clairvoyant algorithm in this instance is $M^t$, as it offers the item with a probability of one to buyer $t$. Thus,
\begin{align}\label{eq:M-alpha}
 \frac{M^t}{M-1}+\big(1-\frac{\alpha}{2}\big)^{t-1}\beta_t M^t> \alpha M^t.
\end{align}
Since $M$ is a large number (e.g., $M= \frac{2}{\alpha^2}+1$), and $t\ge 2$, we conclude that $\beta_t\ge \alpha$. This completes the induction. 

Finally, as $\beta_t\in [0, 1], \forall t$, using \eqref{eq:M-alpha}, we must have,
\begin{align}\nonumber
 \frac{\alpha-\frac{1}{M-1}}{\big(1-\frac{\alpha}{2}\big)^{t-1}}< \beta_t\leq 1, 
\end{align}
which cannot hold for a sufficiently large $t$. This contradiction shows that no online algorithm can be $\alpha$-competitive for any constant $\alpha>0$.
\end{proof}

\subsection{Substitutability Property and Items' Heaviness}
Here, we provide several useful lemmas for our later analysis. All the proofs of this section can be found in Appendix I. We start with the following so-called \emph{substitutability} property which essentially says that the demand for a particular item decreases as more items are offered.    

\begin{lemma}\label{lemm:subtitute}
For any assortment $S$ and two items $i\in S, j\notin S$, we have $q_i(S)\ge q_i(S\cup \{j\})$ and $R_i(S)\ge R_i(S\cup\{j\})$. 
\end{lemma}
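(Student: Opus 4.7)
The plan is to reduce both inequalities to a single monotonicity claim about the no-purchase probability, namely that enlarging the assortment can only decrease $q_0$. Once that is in hand, the formulas \eqref{eq:p-q_0} and \eqref{eq:rev-q} deliver both conclusions directly, because $V$ is strictly increasing on $\mathbb{R}_+$ and the map $x\mapsto x/(1-x)$ is strictly increasing on $[0,1)$.

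To prove the monotonicity of $q_0$, I would rewrite the defining equation \eqref{eq:q_0} in the form $F_S(x)=1$, where
\begin{equation*}
F_S(x) \;:=\; x \;+\; \sum_{k\in S} V\!\bigl(x\,e^{\theta_k-1}\bigr).
\end{equation*}
Because $V$ is strictly increasing with $V(0)=0$, the function $F_S$ is continuous and strictly increasing on $[0,1]$, starts at $F_S(0)=0$, and (since $V<1$) the equation $F_S(x)=1$ has a unique root, namely $q_0(S)$. For the enlarged assortment $S\cup\{j\}$, the corresponding function satisfies
\begin{equation*}
F_{S\cup\{j\}}(x)\;=\;F_S(x)\;+\;V\!\bigl(x\,e^{\theta_j-1}\bigr)\;\ge\;F_S(x)\qquad\text{for all }x\ge 0,
\end{equation*}
with strict inequality whenever $x>0$. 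Evaluating at $x=q_0(S\cup\{j\})$ gives $F_S\!\bigl(q_0(S\cup\{j\})\bigr)\le F_{S\cup\{j\}}\!\bigl(q_0(S\cup\{j\})\bigr)=1=F_S(q_0(S))$, and the strict monotonicity of $F_S$ then forces $q_0(S\cup\{j\})\le q_0(S)$.

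With this inequality, \eqref{eq:p-q_0} immediately yields $q_i(S\cup\{j\})=V\!\bigl(q_0(S\cup\{j\})e^{\theta_i-1}\bigr)\le V\!\bigl(q_0(S)e^{\theta_i-1}\bigr)=q_i(S)$, since $V$ is increasing. Finally, because $q_i(\cdot)\in[0,1)$ and $x\mapsto x/(1-x)$ is strictly increasing on that interval, the revenue formula \eqref{eq:rev-q} gives $R_i(S\cup\{j\})\le R_i(S)$.

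I do not anticipate any genuine obstacle here; the only subtle point is ensuring that the comparison argument uses strict monotonicity in the right place (so that we can invert $F_S$ on $[0,1]$), which is guaranteed by the strict monotonicity of $V$ established in \cite{zheng2019optimal-a}. Everything else is a one-line consequence of the closed-form expressions \eqref{eq:p-q_0}--\eqref{eq:rev-q}.
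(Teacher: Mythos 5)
Your proof is correct and follows essentially the same route as the paper: both reduce the claim to the monotonicity $q_0(S\cup\{j\})\le q_0(S)$ via equation \eqref{eq:q_0} and then conclude using the monotonicity of $V(\cdot)$ and of $x\mapsto x/(1-x)$. The only cosmetic difference is that you phrase the $q_0$ step as a direct comparison of the strictly increasing functions $F_S$ and $F_{S\cup\{j\}}$, whereas the paper argues by contradiction; the underlying idea is identical.
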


\begin{definition}\label{def:heavy}
Let $\lambda\in (\frac{1}{2}, 1)$ be a fixed parameter. An item $i$ is called \emph{heavy} if offering it alone obtains at least $\lambda$-fraction of the market share, i.e., $q_i(\{i\})\ge \lambda$. An item $i$ is \emph{heavier} than item $j$ if $q_i(\{i\})\ge q_j(\{j\})$. We use $[h]=\{1,2,\ldots,h\}$ to denote the set of heavy items, if any, and refer to any item $i\notin[h]$ as a \emph{light} item.     
\end{definition}

Note that if it has access to items' qualities, an online algorithm can use \eqref{eq:p-q_0} and \eqref{eq:q_0} to easily compute the expected demands $q_i(\{i\})$ a priori. Therefore, for a fixed threshold $\lambda$, we may assume without loss of generality that an online algorithm knows the heaviness of all the items. Later, we will optimize the competitive ratio of our devised algorithm over $\lambda\in (0, \frac{1}{2})$ to obtain the optimal heaviness threshold.

\begin{remark}
As items are sorted according to their qualities $\theta_1\ge \ldots\ge \theta_n$, using  \eqref{eq:p-q_0} and the monotonicity of $V(\cdot)$, one can see that the same order must hold on the heaviness of the items, i.e., $q_1(\{1\})\ge\ldots\ge q_n(\{n\})$.
\end{remark}

\begin{lemma}\label{lemm-heavy-two}
Let $f(\lambda)=\max\{1+(\frac{1-\lambda}{\lambda})^2, \frac{1}{\lambda}\}$. Consider an arbitrary assortment $S$, and let $i\in [h]$ be a heavy item that is at least as heavy as any other item in $S$. Then, offering $i$ alone obtains at least $f(\lambda)$-fraction of the expected revenue of offering $S$, i.e., $R(S)\leq f(\lambda)\frac{q_i(\{i\})}{1-q_i(\{i\})}$.
\end{lemma}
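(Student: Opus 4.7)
The plan is to control $R(S)$ by factoring out the largest demand appearing in $S$, and then convert the resulting bound into the desired form using heaviness. The empty case $S=\emptyset$ is trivial ($R(\emptyset)=0$), so assume $S\neq\emptyset$ and let $k\in S$ be a heaviest item in $S$ (one may take $k=i$ if $i\in S$). Since by the remark preceding the lemma the heaviness and quality orderings coincide, $\theta_k\ge \theta_j$ for all $j\in S$, and the hypothesis on $i$ gives $\theta_i\ge \theta_k$, hence $q_i(\{i\})\ge q_k(\{k\})$. Applying \eqref{eq:p-q_0} at the fixed no-purchase level $q_0(S)$ together with the monotonicity of $V$ also yields $q_j(S)\le q_k(S)$ for every $j\in S$.

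The core step is then the bound
\begin{align*}
R(S)=\sum_{j\in S}\frac{q_j(S)}{1-q_j(S)}\le \frac{1}{1-q_k(S)}\sum_{j\in S}q_j(S)=\frac{1-q_0(S)}{1-q_k(S)}\le \frac{1}{1-q_k(S)},
\end{align*}
where the first inequality replaces each denominator uniformly by $1-q_k(S)$ and the equality invokes \eqref{eq:q_0}. Substitutability (Lemma~\ref{lemm:subtitute}, applied iteratively to strip items of $S\setminus\{k\}$ one at a time) then gives $q_k(S)\le q_k(\{k\})$, which combined with $q_k(\{k\})\le q_i(\{i\})$ produces $R(S)\le 1/(1-q_i(\{i\}))$.

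To finish, since $q_i(\{i\})\ge \lambda$ by heaviness and $f(\lambda)\ge 1/\lambda$ by definition of the maximum,
\begin{align*}
R(S)\le \frac{1}{1-q_i(\{i\})}=\frac{1}{q_i(\{i\})}\cdot\frac{q_i(\{i\})}{1-q_i(\{i\})}\le \frac{1}{\lambda}\cdot\frac{q_i(\{i\})}{1-q_i(\{i\})}\le f(\lambda)\cdot\frac{q_i(\{i\})}{1-q_i(\{i\})},
\end{align*}
which is the claimed bound. The subtle point I anticipate spending the most care on is the case $i\notin S$: $q_i(\{i\})$ does not appear as a demand inside the sum defining $R(S)$, so the proxy item $k$ is essential; once $k$ is chosen, substitutability chained with the heaviness–quality equivalence delivers $q_k(S)\le q_k(\{k\})\le q_i(\{i\})$, the link that drives the whole argument. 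Everything else is a routine consequence of $V$ being increasing and of the substitutability lemma.
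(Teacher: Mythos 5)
Your proof is correct, and it takes a genuinely different route from the paper's. The paper splits into two cases according to whether the heaviest item in $S$ captures at least a $\lambda$-fraction of the market inside $S$ (i.e., $q_i(S)\ge\lambda$ or not): in the first case it bounds the heavy item's term and the residual sum $\sum_{j\ne i}\frac{q_j(S)}{1-q_j(S)}\le\frac{1-\lambda}{\lambda}$ separately, arriving at the constant $1+(\frac{1-\lambda}{\lambda})^2$; in the second case it bounds every denominator by $1-\lambda$ to get the constant $\frac{1}{\lambda}$; the case $i\notin S$ is then handled by a proxy argument much like yours. Your single uniform bound $R(S)\le\frac{1}{1-q_k(S)}\sum_{j\in S}q_j(S)=\frac{1-q_0(S)}{1-q_k(S)}\le\frac{1}{1-q_k(S)}$, followed by the substitutability chain $q_k(S)\le q_k(\{k\})\le q_i(\{i\})$, avoids the case split entirely and yields the constant $\frac{1}{\lambda}$ in all cases. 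This is not weaker: for $\lambda\in(\frac{1}{2},1)$ one has $\lambda - \bigl(\lambda^2+(1-\lambda)^2\bigr)=-(2\lambda-1)(\lambda-1)>0$, so $1+(\frac{1-\lambda}{\lambda})^2<\frac{1}{\lambda}$ and hence $f(\lambda)=\frac{1}{\lambda}$ on the whole relevant range; your argument therefore recovers exactly the lemma's bound (and shows the first branch of the $\max$ in $f$ is never binding). The one hypothesis you should cite explicitly is that $q_j(\{j\})$ is strictly increasing in $\theta_j$ (which follows from \eqref{eq:q_0} since raising $\theta_j$ lowers $q_0(\{j\})$), so that ``heaviest in $S$'' and ``highest quality in $S$'' indeed pick out the same item $k$; with that noted, every step — the uniform denominator replacement, the identity $\sum_{j\in S}q_j(S)=1-q_0(S)$, the iterated use of Lemma~\ref{lemm:subtitute}, and the final $\frac{1}{q_i(\{i\})}\le\frac{1}{\lambda}\le f(\lambda)$ — is sound.
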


The next lemma shows that in the absence of heavy items, a simple greedy algorithm that offers all the available items at each round is $\frac{1}{2}$-competitive. We shall use this lemma latter to connect the revenue obtained during the second phase of our algorithm to the revenue of the optimal clairvoyant benchmark.

\begin{lemma}\label{lemm:basic-Big}
An online algorithm that at each time offers all the available items achieves at least $\frac{1}{2}$ of optimal value of the following LP:
\begin{align}\nonumber
\max &\ \ \sum_{t=1}^m\sum_{S} \sum_i q_i(S)y^t(S)\cr 
\mbox{s.t.}  &\ \ \sum_{t=1}^m\sum_{S} \boldsymbol{q}(S)y^t(S)\leq \boldsymbol{c},\cr 
& \ \ \boldsymbol{y}^t\in \Delta_{2^n}, \forall t=1,\ldots,m.
\end{align}  
\end{lemma}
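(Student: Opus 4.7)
My plan is to sandwich the LP optimum between two natural quantities, $mp$ and $C$ (where $p:=1-q_0([n])$ and $C:=\sum_i c_i$), and then show that the greedy algorithm that offers all available items captures at least half of their minimum. First, I would show that $1-q_0(\cdot)$ is non-decreasing under set inclusion. This follows from the fixed-point equation \eqref{eq:q_0}: enlarging $S$ to $S\cup\{j\}$ introduces the strictly positive term $V(q_0\,e^{\theta_j-1})$ on the left-hand side, and because $V$ is strictly increasing, equilibrium forces $q_0$ to strictly decrease. Combined with the identity $\sum_{i\in S}q_i(S)=1-q_0(S)$, this gives $\sum_{i\in S}q_i(S)\le p$ for every $S\subseteq[n]$. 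The simplex constraint of the LP then implies $\text{OPT}\le mp$, and summing the $n$ inventory constraints implies $\text{OPT}\le C$, so $\text{OPT}\le\min(mp,C)$.

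Next I would analyze the greedy algorithm. Let $A_t$ denote the random set of items still in stock at time $t$ (with $A_1=[n]$), and let $Y_t=\boldsymbol{1}\{\text{a sale occurs at time }t\}$. Conditional on the history $\mathcal{F}_{t-1}$, we have $\mathbb{E}[Y_t\mid\mathcal{F}_{t-1}]=1-q_0(A_t)$, so $\text{ALG}=\sum_{t=1}^m\mathbb{E}[1-q_0(A_t)]=\mathbb{E}[T]$ where $T:=\sum_t Y_t\le C$. Introduce the bounded stopping time $\tau:=m\wedge\min\{t:A_{t+1}\neq[n]\}$ capturing the first depletion event. Because $A_t=[n]$ for every $t\le\tau$, we have $(1-q_0(A_t))\boldsymbol{1}\{t\le\tau\}=p\,\boldsymbol{1}\{t\le\tau\}$, and summing over $t$ together with $\mathbb{E}[\tau]=\sum_t\mathbb{P}[\tau\ge t]$ yields the Wald-type identity
\begin{align}\nonumber
\text{ALG}\;\ge\;\sum_{t=1}^m \mathbb{E}\bigl[(1-q_0(A_t))\boldsymbol{1}\{t\le\tau\}\bigr]\;=\;p\cdot\mathbb{E}[\tau].
\end{align}

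Finally I close by case analysis. If $\mathbb{E}[\tau]\ge m/2$, then $\text{ALG}\ge mp/2\ge\text{OPT}/2$ immediately. Otherwise $\mathbb{E}[\tau]<m/2$, and Markov's inequality yields $\mathbb{P}[\tau<m]>\tfrac{1}{2}$, i.e., the first depletion occurs strictly before the horizon with probability exceeding one half. My plan is to iterate the argument on the residual problem: after $\tau$ the greedy algorithm faces $n-1$ items, $m-\tau$ remaining rounds, and reduced inventories, and by induction on the number of items (the base case, a single item, following from standard binomial-median concentration) at least half of the residual capacity is sold in expectation, producing $\text{ALG}\ge C/2\ge\text{OPT}/2$.

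I expect the main technical obstacle to lie in this second case: the per-round sale rate $1-q_0(A_t)$ can drop substantially once items begin to deplete, so the induction must control the expected slack $\mathbb{E}\bigl[\sum_{t>\tau}(p-(1-q_0(A_t)))\bigr]$ carefully, particularly when the capacities $c_i$ are unequal. A cleaner alternative I would attempt if the induction turns out unwieldy is a direct sample-path coupling of $T$ with an i.i.d.\ Bernoulli$(p)$ sequence of length $m$, combined with the median concentration of the binomial, to deduce $\mathbb{E}[T]\ge\tfrac{1}{2}\min(mp,C)$ in one shot.
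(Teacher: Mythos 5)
Your upper bound $\mathrm{OPT}\le\min(mp,C)$ is correct, and so is the first branch of your case analysis: when $\mathbb{E}[\tau]\ge m/2$, the chain $\mathrm{ALG}\ge p\,\mathbb{E}[\tau]\ge mp/2\ge\mathrm{OPT}/2$ goes through. But the overall strategy has a genuine gap: the target inequality $\mathrm{ALG}\ge\frac{1}{2}\min(mp,C)$ is simply false, so neither the induction in your second case nor the Bernoulli-coupling alternative can be completed. Concretely, take two items: item $1$ with $q_1(\{1\})=0.6$ and $c_1=1$, and item $2$ with $q_2(\{2\})=0.01$ and $c_2=1000$; then $p\approx 0.6$, and choosing $m\approx 1650$ makes $mp\approx C=1001$. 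The greedy depletes item $1$ after $O(1)$ rounds and thereafter sells at rate $q_2(\{2\})=0.01$, so $\mathbb{E}[T]\approx 1+0.01\,m\approx 17$, while $\frac{1}{2}\min(mp,C)\approx 500$. The lemma itself still holds on this instance because $\mathrm{OPT}\approx 17$ as well: the LP's contribution from item $i$ is bounded by $\min\{c_i,\,m\max_S q_i(S)\}=\min\{c_i,\,m\,q_i(\{i\})\}$, a per-item rate constraint that your proxy $\min(mp,C)$ discards, which is why it can overestimate $\mathrm{OPT}$ by an unbounded factor. The difficulty you flag at the end --- that $1-q_0(A_t)$ drops after depletions --- is therefore not a technicality to be controlled by a careful induction; it is exactly why no bound of the form $\mathbb{E}[T]\ge\frac{1}{2}\min(mp,C)$, and in particular no sample-path coupling with i.i.d.\ Bernoulli$(p)$ trials, can hold.

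The paper's proof never leaves the LP: it reinterprets the objective $\sum_i q_i(S)$ as a fixed-revenue assortment problem with $r_i=1$, notes that offering all available items is exactly the greedy rule $\argmax_{S\subseteq\mathcal{A}_t}\sum_{i\in S}r_i q_i(S)$ (since $\sum_{i\in S}q_i(S)=1-q_0(S)$ is maximized at $S=\mathcal{A}_t$ by the monotonicity you establish in your first paragraph), and then invokes the known $\frac{1}{2}$-competitiveness of this greedy rule against the LP from \cite[Example~1]{golrezaei2014real}. That result rests on a per-item charging argument comparing round by round against the LP's own assortments (revenue lost on stocked-out items is charged to the revenue collected while depleting them), which is the ingredient a self-contained proof would need in place of the comparison with $\min(mp,C)$.
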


\medskip
\section{A Competitive Algorithm for the Online Assortment Problem}\label{sec:main}

In this section, we first describe a deterministic online assortment algorithm and then prove its performance guarantee. The proposed algorithm is very simple and requires only finding the heaviest item at each round with an overall computation of $O(mn)$. The algorithm consists of two phases. In the first phase, we take care of the heavy items (if any) by offering them alone until either they are fully sold or no buyer is left. We then take care of the light items by offering them all together in larger bundles. A formal description of this algorithm, which we shall refer to as ``{\bf Alg}", is summarized below.

\begin{algorithm}[h]\caption{A Deterministic Online Hybrid Algorithm}\label{alg-main}
{\bf Phase 1:} Let $[h]=\{1,2,\ldots,h\}$ denote the set of heavy items that are sorted according to their quality (heaviness), i.e., $\theta_1\ge\ldots\ge\theta_h$. Starting from time $t=0$, offer the items in $[h]$ individually and according to their quality order until either all the heavy items have been fully sold, in which case go to Phase 2, or no buyer is left, in which case stop.    

\noindent 
{\bf Phase 2:} At each time $t$, bundle all the available light items together and offer them to buyer $t$ until either we run out of items or no buyer is left. 
\end{algorithm}

Note that since an online algorithm does not know the number of buyers or the sellers' initial inventory levels ahead of time, a competitive online algorithm must carefully balance a trade-off between two scenarios: 1) increasing the chance of selling items by offering larger assortments (hence reducing prices) when there are only a few buyers and many items, and 2) reducing the competition among the sellers by offering smaller assortments (hence increasing prices) when there are many buyers and only a few items. In fact, the desire to balance those two cases is the main reason why the two phases of Algorithm \ref{alg-main} were developed. It is worth noting that Algorithm \ref{alg-main} can be implemented in a fully online fashion and that does not need to know anything about the number of buyers $m$ or the initial inventory levels $\boldsymbol{c}$. At each time $t$, it only needs to know whether there is an arriving buyer and whether an item is still available at that time. Finally, we note that because of the random realization of buyers' choices, the time at which Phase 1 in Algorithm \ref{alg-main} terminates is a random variable $\tau$, which can be at most $\min\{m,\sum_{i=1}^hc_i\}$. The following lemma shows that for all sample paths $w$ that Algorithm \ref{alg-main} has a chance to enter in its second phase (i.e., $\tau(w)\leq m$), the expected revenue obtained during Phase 2 is within a constant factor of the total revenue that light items contribute to the optimal LP value in \eqref{eq:LP-Game}.

\begin{lemma}\label{lemm:phase2-analysis}
Let $\omega$ be any sample path for purchasing items of length $\tau(\omega)\leq m$ that may be realized during Phase 1 of Algorithm \ref{alg-main}. Conditioned on $\omega$, the expected revenue obtained in Phase 2 is at least $\frac{1-\lambda}{2}$ of the total revenue that light items contribute to OPT after time $\tau(\omega)$, where the expectation is with respect to random choices made by buyers during Phase 2. 
\end{lemma}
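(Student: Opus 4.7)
The plan is to relate Phase 2 revenue, conditioned on $\omega$, to OPT's light-item revenue after $\tau(\omega)$ by a two-step reduction: first convert both sides to expected demands using the light-item bound $q_i(S)\le\lambda$, then apply Lemma \ref{lemm:basic-Big} to the residual sub-problem that Phase 1 leaves behind. Write $R^{\mathrm{OPT}}_{\text{light,aft}}:=\sum_{t>\tau(\omega)}\sum_S\sum_{i\text{ light}}R_i(S)\,y^{t,\mathrm{OPT}}(S)$ for the quantity on the right-hand side of the claim, and note that conditioning on $\omega$ leaves each light item $i$ with its full inventory $c_i$ (Phase 1 only offers heavy items), so that Phase 2 is exactly the ``offer all currently available items'' policy applied to the residual problem with item set $[n]\setminus[h]$, horizon $\{\tau(\omega)+1,\ldots,m\}$, and inventories $\{c_i\}_{i\text{ light}}$.

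Next I would exploit the two-sided revenue-to-demand bound available specifically for light items. Substitutability (Lemma \ref{lemm:subtitute}) and Definition \ref{def:heavy} give $q_i(S)\le q_i(\{i\})\le\lambda$ for every light $i\in S$, hence
\begin{equation*}
q_i(S)\;\le\;R_i(S)\;=\;\frac{q_i(S)}{1-q_i(S)}\;\le\;\frac{q_i(S)}{1-\lambda}.
\end{equation*}
Summing the left inequality along the assortments offered in Phase 2 gives $\mathbb{E}[\text{Phase 2 revenue}\mid\omega]\ge\mathbb{E}[\text{Phase 2 demand}\mid\omega]$, while summing the right inequality against $y^{t,\mathrm{OPT}}$ for $t>\tau(\omega)$ gives $R^{\mathrm{OPT}}_{\text{light,aft}}\le\tfrac{1}{1-\lambda}D$, where
\begin{equation*}
D \;:=\; \sum_{t>\tau(\omega)}\sum_{S}\sum_{\substack{i\in S\\ i\text{ light}}} q_i(S)\,y^{t,\mathrm{OPT}}(S).
\end{equation*}

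Third, I would invoke Lemma \ref{lemm:basic-Big} on the residual problem: the Phase 2 expected demand is at least $\tfrac{1}{2}\mathrm{LP}_{\mathrm{res}}$, where $\mathrm{LP}_{\mathrm{res}}$ is the optimum of the demand LP obtained from \eqref{eq:LP-Game} by restricting items to $[n]\setminus[h]$ and times to $t>\tau(\omega)$, keeping inventories $\{c_i\}_{i\text{ light}}$ and replacing the revenue objective $R(S)$ with the demand objective $\sum_i q_i(S)$. The remaining quantitative step is to argue $\mathrm{LP}_{\mathrm{res}}\ge D$. The natural candidate is the projection of OPT onto light-only assortments, $\tilde y^t(S'):=\sum_{S:\,S\cap([n]\setminus[h])=S'}y^{t,\mathrm{OPT}}(S)$, whose demand objective is at least $D$ by substitutability $q_i(S')\ge q_i(S)$ for $S'\subseteq S$. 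Chaining the estimates yields
\begin{equation*}
\mathbb{E}[\text{Phase 2 revenue}\mid\omega]\;\ge\;\tfrac{1}{2}\mathrm{LP}_{\mathrm{res}}\;\ge\;\tfrac{1}{2}D\;\ge\;\tfrac{1-\lambda}{2}R^{\mathrm{OPT}}_{\text{light,aft}}.
\end{equation*}

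The main obstacle is verifying feasibility of $\tilde y$ under the residual inventory constraints: the same substitutability that boosts the demand objective also inflates $q_i(S')$ relative to $q_i(S)$, so $\sum_{t>\tau}\sum_{S'}q_i(S')\tilde y^t(S')$ can exceed $c_i$. I would resolve this by combining the bound $q_i(S')\le\lambda$ with a rescaling or averaging argument, or equivalently by constructing $\tilde y$ instead as a fractional mixture over singletons $\{i\}$ that realizes each $u_i^{\text{aft}}:=\sum_{t>\tau}\sum_S q_i(S)\,y^{t,\mathrm{OPT}}(S)\le c_i$ in demand while respecting the residual inventories. Apart from this feasibility verification, the rest of the argument is a direct combination of substitutability with the two-sided light-item revenue-to-demand conversion.
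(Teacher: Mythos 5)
Your reduction skeleton is exactly the paper's: convert the light-item revenue contribution of OPT to a demand quantity at a cost of $(1-\lambda)$ via $q_i(S)\le R_i(S)\le \frac{q_i(S)}{1-\lambda}$, invoke Lemma \ref{lemm:basic-Big} on the residual light-only problem to lose another factor $\tfrac12$, and reduce everything to showing that the residual demand LP over light-only assortments has value at least $D=\sum_{t>\tau}\sum_S\sum_{i\in L}q_i(S)y^t(S)$. You have also correctly located the one nontrivial step: the naive projection $\tilde y^t(S')=\sum_{S:S\setminus[h]=S'}y^t(S)$ has demand objective at least $D$ but may violate the residual inventory constraints, precisely because substitutability inflates $q_i(S')$ above $q_i(S)$.

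The gap is that neither of your proposed fixes closes this step, and this step is where the paper spends essentially its entire proof. A uniform rescaling of $\tilde y$ fails because the inflation ratio $q_i(S')/q_i(S)$ is not bounded by any constant (an item can have small demand in $S$ if $S$ contains heavy items, and much larger demand once those are stripped away), so the rescaling factor needed for feasibility can destroy the objective. The singleton-mixture alternative fails for a mass-budget reason: realizing demand $u_i=\sum_{t>\tau}\sum_S q_i(S)y^t(S)$ using only assortments $\{i\}$ requires total probability mass $\sum_{i\in L}u_i/q_i(\{i\})$ across the $m-\tau$ remaining slots, and if OPT offers a large bundle $L$ of comparable light items at every step, each term $u_i/q_i(\{i\})$ is close to $m-\tau$, so the required mass is roughly $|L|(m-\tau)$ and the per-step simplex constraints $\boldsymbol{z}^t\in\Delta_{2^{|L|}}$ cannot be met. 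What is actually needed is a distribution over \emph{general} subsets of $L$ matching OPT's per-item demand $p_i=\sum_S q_i(S)y^t(S)$ \emph{exactly} for every light $i$ (exact matching gives feasibility for free, since OPT is feasible). The paper establishes existence of such a distribution by setting up the auxiliary slack-minimization LP \eqref{eq:LP-slack}, showing it is feasible via your projection, and then proving its optimal value is zero through an inductive mass-redistribution argument over nested subsets $T^*\supset T^*_2\supset T^*_3\supset\cdots$, where substitutability guarantees at each stage that enough leftover mass remains to tighten one more constraint. Without this (or an equivalent) construction, the chain $\mathrm{LP}_{\mathrm{res}}\ge D$ is unproven and the lemma does not follow.
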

\begin{proof}
Without loss of generality, we may assume $\tau(\omega)<m$. Otherwise, for $\tau(\omega)=m$, both revenues in the statement of the theorem are equal to zero, and the result trivially holds. Let $\{\boldsymbol{y}^t\}_{t=\tau(\omega)+1}^{m}$ be the optimal offline solution to LP in \eqref{eq:LP-Game} over $[\tau(\omega)+1, m]$. The total contribution of light items $L=[n]\setminus[h]$ to the revenue of OPT after time $\tau(\omega)$ is
\begin{align}\nonumber
D=\sum_{t=\tau(\omega)+1}^{m}\sum_{i\in L}\sum_{S\subseteq [n]}\frac{q_i(S)}{1-q_i(S)} y^t(S).
\end{align}
Now let us consider
\begin{align}\label{eq:LP-residual}
\max &\ \ \sum_{t=\tau(\omega)+1}^m\sum_{i\in L}\sum_{T\subseteq L}q_i(T)x^t(T)\cr 
\mbox{s.t.}  &\ \ \sum_{t=\tau(\omega)+1}^m\sum_{T\subseteq L}q_i(T)x^t(T)\leq c_i \ \ \forall i\in L,\cr 
& \ \ \ \ \ \boldsymbol{x}^t\in \Delta_{2^{|L|}} \ \ t=\tau+1,\ldots,m,
\end{align}
and notice that \eqref{eq:LP-residual} is precisely the LP relaxation upper bound for any online algorithm that can be used during Phase 2 with constant revenues $r_i=1, \forall i\in L$. On the other hand, by Lemma \ref{lemm:basic-Big}, the greedy algorithm used during Phase 2 obtains at least $\frac{1}{2}$ of the optimal value in \eqref{eq:LP-residual}. Thus, if there exists a feasible solution to \eqref{eq:LP-residual} with an objective value of at least $(1-\lambda)D$, then we can conclude that the revenue obtained during Phase 2 is at least $\frac{(1-\lambda)D}{2}$, completing the proof. Thus, in the rest of the proof, we construct a feasible solution to \eqref{eq:LP-residual} with an objective value of at least $(1-\lambda)D$.

Let us fix an arbitrary $t\in [\tau(\omega)+1, m]$ and, for simplicity, we drop the time index $t$. Define $p_i=\sum_{S\subseteq [n]}q_i(S)y(S)$ for $i\in L$, and consider the following LP: 
\begin{align}\label{eq:LP-slack}
\min &\ \ \sum_{i\in L}(\sum_{T\subseteq L}q_i(T)z(T)-p_i)\cr 
\mbox{s.t.}  &\ \ \sum_{T\subseteq L}q_i(T)z(T)-p_i\geq 0 \ \forall i\in L,\cr 
& \ \ \ \boldsymbol{z}\in \Delta_{2^{|L|}}.
\end{align} 
Note that \eqref{eq:LP-slack} is a feasible LP, because using the substitutability property, we have
\begin{align}\nonumber
p_i=\sum_{S\subseteq [n]}q_i(S)y(S)&\leq \sum_{S\subseteq [n]}q_i(S\!\setminus\! [h])y(S)=\sum_{T\subseteq L} \Big(\!\sum_{S\setminus[h]=T}\!\!y(S)\Big)q_i(T),
\end{align} 
and hence $z(T)=\sum_{S\setminus[h]=T}y(S),\ T\subseteq L$ is a feasible solution to \eqref{eq:LP-slack}. 

Next, we show that the optimal value of \eqref{eq:LP-slack} is zero. To derive a contradiction, let $z^*$ be an optimal solution to \eqref{eq:LP-slack} with a strictly positive objective value, and let $i^*$ be an item for which  $\sum_{T\subseteq L}q_{i^*}(T)z^*(T)- p_{i^*}>0$. That means that there exists $T^*\subseteq L$ such that $i^*\in T^*$ and $z^*(T^*)>0$. Note that $T^*$ must contain at least one item other than $i^*$. Otherwise, if $T^*=\{i^*\}$, reducing $z^*(T^*)$ by a small positive amount and increasing $z^*(\emptyset)$ by the same amount will give us another feasible solution to \eqref{eq:LP-slack} with a strictly smaller objective value, contradicting the optimality of $z^*$. Now let us partition the items in $T^*$ into $T^*_1$ and $T^*_2$, where 
\begin{align}\nonumber
&T^*_1=\Big\{i\in T^*: \sum_{T\subseteq L}q_{i}(T)z^*(T)- p_{i}>0\Big\},\cr 
&T^*_2=\Big\{i\in T^*: \sum_{T\subseteq L}q_{i}(T)z^*(T)- p_{i}=0\Big\}.
\end{align}
By the definition of $T^*_1$, that means that there exists a positive number $\epsilon_1>0$ such that reducing $z^*(T^*)$ to $z^*(T^*)-\epsilon_1$, i.e., setting $z^*(T^*)=z^*(T^*)-\epsilon_1$ (by abuse of notation), will preserve the feasibility of all the constraints associated with items $i\in T^*_1$. Note that such a change can only affect the feasibility of the constraints $i\in T^*$ and has no influence on the constraints $i\notin T^*$ (as $q_i(T^*)=0,\forall i\notin T^*$). Unfortunately, the update violates the constraints $i\in T^*_2$ such that in the new solution $\sum_{T\subseteq L}q_{i}(T)z^*(T)- p_{i}=-\epsilon_1q_{i}(T^*), \forall i\in T^*_2$. However, we will show that one can sequentially redistribute the $\epsilon_1$-mass that was removed from $T^*$ to nested subsets of $T^*_2$ and again satisfy all the constraints in $i\in T^*_2$ at equality.

Let $\epsilon_2=\min_{i\in T^*_2} \frac{q_{i}(T^*)\epsilon_1}{q_{i}(T^*_2)}$, and note that by the substitutability property and since $T^*_2\subset T^*$, we have $\epsilon_2<\epsilon_1$. Therefore, if we return an $\epsilon_2$ amount from $\epsilon_1$-mass to $z^*(T^*_2)$, i.e., set $z^*(T^*_2)=z^*(T^*_2)+\epsilon_2$, each constraint in $i\in T_2^*$ becomes ``more" feasible, and at least one constraint (namely the one that achieves $\argmin_{i\in T^*_2} \frac{q_{i}(T^*)}{q_{i}(T^*_2)}\epsilon_1$) is satisfied by equality. Let $T^*_3\subset T^*_2$ be all the items in $T^*_2$ that are not satisfied by equality after the update, i.e.,
\begin{align}\nonumber
T^*_3=\Big\{i\in T^*_2: \sum_{T\subseteq L}q_{i}(T)z^*(T)- p_{i}=\epsilon_2q_{i}(T^*_2)-\epsilon_1q_{i}(T^*)<0\Big\},
\end{align}
and define $\epsilon_3=\min_{i\in T^*_3} \frac{\epsilon_1q_{i}(T^*)-\epsilon_2q_{i}(T^*_2)}{q_{i}(T^*_3)}$. Again by substitutability and since $T^*_3\subset T^*_2\subset T^*$, we have 
\begin{align}\nonumber
\frac{\epsilon_1q_{i}(T^*)-\epsilon_2q_{i}(T^*_2)}{q_{i}(T^*_3)}\leq \frac{(\epsilon_1-\epsilon_2)q_{i}(T^*)}{q_{i}(T^*_3)}\leq \epsilon_1-\epsilon_2,\ \forall i\in T^*_3,
\end{align}
and thus $\epsilon_3<\epsilon_1-\epsilon_2$. Therefore, if we relocate an $\epsilon_3$ amount of the leftover mass $\epsilon_1-\epsilon_2$ to $z^*(T^*_3)$ by setting $z^*(T^*_2)=z^*(T^*_2)+\epsilon_3$, every constraint in $i\in T_3^*$ becomes more feasible, and at least one constraint is tight at equality. By repeating that argument inductively, one can see that the substitutability property means that we always have enough leftover mass to make one more constraint in $T^*_2$ tight so that at the end of this process all the constraints in $T_2^*$ are satisfied at equality. Finally, using $\epsilon=\epsilon_1-\epsilon_2-\epsilon_3-\ldots$ to denote the leftover mass at the end of this process, we can relocate that mass to the empty set by setting $z^*(\emptyset)=z^*(\emptyset)+\epsilon$. The last step does not affect the feasibility of any constraints and only guarantees that the mass conservation is preserved so that $\boldsymbol{z}^*\in \Delta_{2^{|L|}}$. Thus, at the end of the process, we obtain a feasible solution to \eqref{eq:LP-slack} with a strictly smaller objective value than the initial optimal solution $\boldsymbol{z}^*$, a contradiction. Therefore, the optimal value of \eqref{eq:LP-slack} is zero, and there exists $\boldsymbol{z}^*\in \Delta_{2^{|L|}}$ such that 
\begin{align}\nonumber
\sum_{T\subseteq L}q_i(T)z^*(T)=\sum_{S\subseteq [n]}q_i(S)y(S), \forall i\in L. 
\end{align}

As the above argument holds for any $t\in [\tau(\omega)+1,m]$, we obtain a feasible solution $\{\boldsymbol{z}^{*t}\}_{t=\tau(\omega)+1}^m$ to \eqref{eq:LP-residual} that consumes the exact same amount of each resource $i\in L$ that is consumed by the optimal offline solution $\{\boldsymbol{y}^t\}_{t=\tau(\omega)+1}^m$. In particular, the objective value of \eqref{eq:LP-residual} for $\{\boldsymbol{z}^{*t}\}_{t=\tau+1}^m$ equals
\begin{align}\nonumber
\sum_{t=\tau(\omega)+1}^m\sum_{i\in L}\sum_{T\subseteq L}q_i(T)z^{*t}(T)=\sum_{t=\tau(\omega)+1}^m\sum_{i\in L}\sum_{S\subseteq [n]}q_i(S)y^t(S)\ge  (1-\lambda)\sum_{t=\tau(\omega)+1}^{T}\sum_{i\in L}\sum_{S\subseteq [n]}\frac{q_i(S)}{1-q_i(S)} y^t(S)=(1-\lambda)D,
\end{align}
where the inequality holds because for every light item $i\in L$, $\frac{q_i(S)}{1-q_i(S)}\leq \frac{q_i(S)}{1-\lambda}$. 
\end{proof}


\begin{definition}\label{def-omega}
We let $\Omega$ denote the set of all sample paths of \emph{arbitrary} length that can be realized by offering heavy items individually and according to their quality order until all the heavy items $[h]$ have been fully sold.\footnote{This implies that any $\omega\in \Omega$ sells all the heavy items, albeit some to ``virtual" buyers that come after time $m$. In other words,  $\Omega$ is the set of sample paths of arbitrary length that can be realized by executing Phase 1, assuming that there are infinitely many buyers.} Moreover,  we let $\tau_i:\Omega\to Z_+$ be a random variable, where $\tau_i(\omega)$ denotes the first time that item $i\in[h]$ is fully sold over the sample path $\omega$. We note that $\tau_1\leq \ldots\leq \tau_h$. 
\end{definition}

\begin{definition}
We define $A\subseteq \Omega$ to include sample paths for which all the heavy items are fully sold during Phase 1, i.e.,  $A=\{\omega\in \Omega: \tau_h(\omega)\leq m\}$. Moreover,  we let $\bar{A}=\Omega\!\setminus\! A=\{\omega\in \Omega: \tau_h(\omega)> m\}$.  Intuitively,  the sets $A$ and $\bar{A}$ contain all sample paths for which Algorithm \ref{alg-main} gets a chance and does not get a chance to enter its second phase, respectively.
\end{definition}

The reason for introducing the extended sample space $\Omega$ in Definition \ref{def-omega} is to assure that all the subsequent random variables are defined over the same probability space $(\Omega, \mathbb{P})$.  Such a definition naturally allows us to compute the realization probability of an actual sample path $\omega_0$ during the execution of Phase 1 of Algorithm \ref{alg-main}.  More precisely,  given an actual sample path $\omega_0$ of length at most $m$ that is realized during the execution of Phase 1 of Algorithm \ref{alg-main}, either all the heavy items are sold over $\omega_0$, in which case $\omega_0\in A$, and so the actual and extended probabilities are the same. Otherwise,  the actual probability that $\omega_0$ is realized equals the sum of probabilities of all extended sample paths in $\bar{A}$ whose first $m$ time instances coincide with $\omega_0$.



\begin{lemma}\label{eq:OPT-ALG-A}
Let $\mathbb{E}[R({\rm Alg})|A]$ denote the expected revenue of Algorithm \ref{alg-main} over all the sample paths in $A$. Then, we have ${\rm OPT}\leq \frac{2}{1-\lambda}\mathbb{E}[R({\rm Alg})|A]$.
\end{lemma}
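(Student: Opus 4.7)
The plan is to decompose $\mathrm{OPT}$ into the contributions from heavy and light items and to charge each piece to the corresponding phase of Algorithm~\ref{alg-main}. Let $\{y^{*t}\}_{t=1}^{m}$ be an optimal solution to the LP in \eqref{eq:LP-Game} and write
\[
\mathrm{OPT}_{h}=\sum_{t,S}\sum_{i\in[h]}R_i(S)\,y^{*t}(S),\qquad
\mathrm{OPT}_{\ell}=\sum_{t,S}\sum_{i\in L}R_i(S)\,y^{*t}(S),
\]
so that $\mathrm{OPT}=\mathrm{OPT}_h+\mathrm{OPT}_\ell$. The goal is to show that, conditional on $A$, the Phase~1 revenue dominates $\mathrm{OPT}_h$ and the expected Phase~2 revenue dominates $\frac{1-\lambda}{2}\mathrm{OPT}_\ell$, from which the claim follows by a one-line arithmetic.

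For the heavy part, I would first invoke the substitutability property (Lemma~\ref{lemm:subtitute}): for any heavy $i\in S$ we have $q_i(S)\le q_i(\{i\})$, and thus $R_i(S)=\tfrac{q_i(S)}{1-q_i(S)}\le \tfrac{q_i(S)}{1-q_i(\{i\})}$. Summing over $(t,S)$ and using the LP inventory constraint $\sum_{t,S}q_i(S)y^{*t}(S)\le c_i$ yields
\[
\mathrm{OPT}_h\le\sum_{i\in[h]}\frac{c_i}{1-q_i(\{i\})}.
\]
On every $\omega\in A$, however, Phase~1 of Algorithm~\ref{alg-main} completes and each heavy item $i$ is sold exactly $c_i$ times at its unit-inventory equilibrium price $p_i(\{i\})=\tfrac{1}{1-q_i(\{i\})}$. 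So the Phase~1 revenue on $A$ is deterministic and equals the right-hand side above, i.e.\ it is at least $\mathrm{OPT}_h$.

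For the light part, I would apply Lemma~\ref{lemm:phase2-analysis}: on every $\omega\in A$ the conditional expected Phase~2 revenue is at least $\frac{1-\lambda}{2}D(\omega)$, where $D(\omega)$ denotes the light-item contribution to the LP~\eqref{eq:LP-Game} restricted to the residual horizon $[\tau(\omega)+1,m]$. To conclude I need $\mathbb{E}[D(\omega)\mid A]\ge \mathrm{OPT}_\ell$. The key idea is to exploit the time-symmetry of LP~\eqref{eq:LP-Game}: both its objective $\sum_{t,S}R(S)y^t(S)$ and its constraints $\sum_{t,S}q_i(S)y^t(S)\le c_i$ depend only on $\sum_t y^t$, so without loss of generality $y^*$ is stationary, $y^{*t}\equiv y^*$. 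Because $A$ implies that Phase~1 never touches light inventory, the residual LP starts with full light capacity $c_i$, which gives it a strictly larger per-slot inventory budget than the original LP. Combining this slack with the $z^*$-construction from the proof of Lemma~\ref{lemm:phase2-analysis}—which turns any feasible $y$ into a light-only feasible residual solution with identical light demands $\sum_T q_i(T)z^*(T)=\sum_S q_i(S)y(S)$—produces a feasible residual-LP solution whose light-item objective matches $\mathrm{OPT}_\ell$. Hence $D(\omega)\ge \mathrm{OPT}_\ell$, and so $\mathbb{E}[D(\omega)\mid A]\ge \mathrm{OPT}_\ell$ and $\mathbb{E}[R(\mathrm{Alg}_{\mathrm{Phase 2}})\mid A]\ge \frac{1-\lambda}{2}\mathrm{OPT}_\ell$.

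Putting the two pieces together,
\[
\mathbb{E}[R(\mathrm{Alg})\mid A]
=R(\mathrm{Alg}_{\mathrm{Phase 1}})+\mathbb{E}[R(\mathrm{Alg}_{\mathrm{Phase 2}})\mid A]
\ge \mathrm{OPT}_h+\tfrac{1-\lambda}{2}\mathrm{OPT}_\ell
\ge \tfrac{1-\lambda}{2}(\mathrm{OPT}_h+\mathrm{OPT}_\ell)
=\tfrac{1-\lambda}{2}\mathrm{OPT},
\]
which rearranges to $\mathrm{OPT}\le \tfrac{2}{1-\lambda}\mathbb{E}[R(\mathrm{Alg})\mid A]$. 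The main obstacle is the third paragraph: a pointwise bound $D(\omega)\ge \mathrm{OPT}_\ell$ is delicate because when $\tau(\omega)$ is close to $m$ the residual horizon has very few slots, so one must offset the slot loss by the extra per-slot inventory slack the residual LP inherits on $A$, and then import the mass-redistribution argument from Lemma~\ref{lemm:phase2-analysis}'s proof—showing the optimum of a certain "slackness" LP is $0$—to convert this slack into a feasible residual solution that reproduces the full $\mathrm{OPT}_\ell$.
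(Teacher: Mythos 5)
There is a genuine gap in your third paragraph, and it is exactly where you flagged the difficulty. The pointwise bound $D(\omega)\ge \mathrm{OPT}_\ell$ is false, and so is its conditional expectation version. Under your own stationarity reduction $y^{*t}\equiv y^*$, the light contribution to OPT over the residual window is $D(\omega)=\tfrac{m-\tau(\omega)}{m}\,\mathrm{OPT}_\ell<\mathrm{OPT}_\ell$, and the inventory slack you invoke cannot close this gap: when $\tau(\omega)$ is large the binding constraint of the residual LP is not the inventory $c_i$ but the simplex constraint $\boldsymbol{y}^t\in\Delta_{2^n}$ (one assortment per remaining buyer). Since every light item satisfies $q_i(S)<\lambda$, each residual slot contributes at most $\sum_{i\in L}\tfrac{q_i(S)}{1-q_i(S)}\le \tfrac{1}{1-\lambda}$ to the light objective, so the residual LP value is at most $\tfrac{m-\tau(\omega)}{1-\lambda}$ no matter how much inventory is left. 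Concretely, with one heavy item of inventory $c_1=m/2$ and abundant light inventory, $\mathrm{OPT}_\ell=\Theta(m)$ while on sample paths with $\tau(\omega)$ close to $m$ one has $D(\omega)=\Theta(m-\tau(\omega))\ll \mathrm{OPT}_\ell$. The $z^*$-construction from Lemma~\ref{lemm:phase2-analysis} only reproduces the \emph{per-slot} light demands; it cannot compress $m$ slots of light revenue into $m-\tau(\omega)$ slots.

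The paper's proof avoids this by never trying to charge the light revenue that OPT collects during $[1,\tau_h(\omega)]$ to Phase~2. Instead it charges the \emph{entire} OPT revenue (heavy and light alike) over each window $[\tau_{j-1}(\omega)+1,\tau_j(\omega)]$ to Phase~1: by Lemma~\ref{lemm-heavy-two}, after items $1,\dots,j-1$ are peeled off, every assortment OPT offers in that window earns at most $f(\lambda)\tfrac{q_j(\{j\})}{1-q_j(\{j\})}$, so the window contributes at most $f(\lambda)\tfrac{q_j(\{j\})}{1-q_j(\{j\})}(\tau_j(\omega)-\tau_{j-1}(\omega))$, which after taking $\mathbb{E}[\cdot\,|A]$ and using $\mathbb{E}[\tau_j-\tau_{j-1}\,|A]\le c_j/q_j(\{j\})$ becomes $\tfrac{f(\lambda)c_j}{1-q_j(\{j\})}$, i.e.\ $f(\lambda)$ times the Phase~1 revenue from item $j$. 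Only the light contribution \emph{after} $\tau_h(\omega)$ is sent to Lemma~\ref{lemm:phase2-analysis}, and the final constant comes from $1+f(\lambda)\le \tfrac{2}{1-\lambda}$ for $\lambda\ge\tfrac12$. Your heavy-item bound $\mathrm{OPT}_h\le\sum_{i\in[h]}\tfrac{c_i}{1-q_i(\{i\})}$ and the closing arithmetic are fine; what is missing is an account of the light revenue OPT earns while Phase~1 is still running, and the time-window charging above is the mechanism the paper uses to absorb it.
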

\begin{proof}
Given an arbitrary sample path $\omega\in A$, the revenue obtained during Phase 1 is equal to $R(\mbox{Phase 1}|\omega)=\sum_{i=1}^{h}\frac{c_i}{1-q_i(\{i\})}$. The reason is that over that sample path, all the $c_i$ units of item $i\in[h]$ are fully sold at an equilibrium price of $\frac{1}{1-q_i(\{i\})}$ (as heavy items are offered individually). Thus,
\begin{align}\label{eq:alg-bound-A}
\mathbb{E}[R(\mbox{Alg})|A]=\sum_{i=1}^{h}\frac{c_i}{1-q_i(\{i\})}+\mathbb{E}[R(\mbox{Phase 2})|A].
\end{align}     
Now, given any sample path $\omega\in A$ and using Lemma \ref{lemm-heavy-two}, we can upper-bound the revenue of OPT up to time $\tau_1(\omega)$ as 
\begin{align}\nonumber
\sum_{t=1}^{\tau_1(\omega)}\sum_S R(S)y^t(S)&\leq \sum_{t=1}^{\tau_1(\omega)}\sum_S\frac{f(\lambda)q_1(\{1\})}{1-q_1(\{1\})}y^t(S)=\frac{f(\lambda)q_1(\{1\})}{1-q_1(\{1\})}\tau_1(\omega),
\end{align}
where we recall that $\tau_i(\omega)$ denotes the first time that item $i\in[h]$ is fully sold along the sample path $\omega$.  Also, the total contribution of item $1$ to the value of OPT is
\begin{align}\nonumber
\sum_{t=1}^{m}\sum_S\frac{q_1(S)}{1-q_1(S)}y^t(S)&\leq \sum_{t=1}^{m}\sum_S\frac{q_1(S)}{1-q_1(\{1\})}y^t(S)\leq \frac{c_1}{1-q_1(\{1\})},
\end{align}  
where the first inequality is by the substitutability property, and the second inequality is true because $\{y^t(S)\}$ is a feasible solution to \eqref{eq:LP-Game}. As we have considered the total contribution of item $1$ to OPT, we can  remove item $1$ from all the assortments offered by OPT and upper-bound the remaining revenue of OPT over $[\tau_1(\omega)+1, \tau_2(\omega)]$ as  
\begin{align}\nonumber
&\sum_{t=\tau_1(\omega)+1}^{\tau_2(\omega)}\sum_S \Big(R(S)-\frac{q_1(S)}{1-q_1(S)}\Big)y^t(S)\leq \sum_{t=\tau_1(\omega)+1}^{\tau_2(\omega)}\sum_S R(S\!\setminus\!\{1\})y^t(S)\cr 
&\qquad\qquad\leq \sum_{t=\tau_1(\omega)+1}^{\tau_2(\omega)}\sum_S\frac{f(\lambda)q_2(\{2\})}{1-q_2(\{2\})}y^t(S)=\frac{f(\lambda)q_2(\{2\})}{1-q_2(\{2\})}(\tau_2(\omega)-\tau_1(\omega)),
\end{align}
where the first inequality is by the substitutability property, and the second inequality is from Lemma \ref{lemm-heavy-two} (as the heaviest item in $S\setminus\{1\}$ at best can be item $2$). Similarly, the total contribution of item $2$ to the value of OPT is
\begin{align}\nonumber
\sum_{t=1}^{m}\sum_S\frac{q_2(S)}{1-q_2(S)}y^t(S)&\leq \sum_{t=1}^{m}\sum_S\frac{q_2(S)}{1-q_2(\{2\})}y^t(S)\leq \frac{c_2}{1-q_2(\{2\})}.
\end{align} 
Again, we can safely remove item $2$ from all the assortments in OPT and repeat the same process to show that for any $i\in [h]$ and any sample path $\omega$ with $\tau_i(\omega)\leq m$, the revenue of OPT during time instances $[1,  \tau_i(\omega)]$ is bounded above by
\begin{align}\label{eq:tau-OPT-bound}
\sum_{j=1}^{i}\frac{f(\lambda)q_j(\{j\})}{1-q_j(\{j\})}(\tau_j(\omega)-\tau_{j-1}(\omega))+\sum_{j=1}^{i-1} \frac{c_j}{1-q_j(\{j\})}. 
\end{align}
At the end of the above process, all the heavy items have been completely removed from the assortments offered by OPT, and we will be left only with light items whose contribution to the OPT is $\sum_{t=\tau_h(\omega)+1}^{m}\sum_{S}\sum_{i\in L}\frac{q_i(S)}{1-q_i(S)}y^t(S)$. But from Lemma \ref{lemm:phase2-analysis}, that value is at most $\frac{2}{1-\lambda}\mathbb{E}[R(\mbox{Phase 2})|\omega]$, where the expectation is taken with respect to buyers' random choices during the second phase of the algorithm. Thus, for every sample path $\omega\in A$, we have shown that OPT is at most
\begin{align}\nonumber
\mbox{OPT}&\leq \sum_{i=1}^{h}\frac{f(\lambda)q_i(\{i\})}{1-q_i(\{i\})}(\tau_i(\omega)-\tau_{i-1}(\omega))+\sum_{i=1}^{h} \frac{c_i}{1-q_i(\{i\})}+\frac{2}{1-\lambda}\mathbb{E}[R(\mbox{Phase 2})|\omega],
\end{align}  
where by convention $\tau_0=0$. Taking the conditional expectation $\mathbb{E}[\cdot|A]$ from the above inequality over all $\omega\in A$, we get
\begin{align}\nonumber
\mbox{OPT}&\leq \sum_{i=1}^{h}\frac{f(\lambda)q_i(\{i\})}{1-q_i(\{i\})}\mathbb{E}[\tau_i-\tau_{i-1}|A]+\sum_{i=1}^{h} \frac{c_i}{1-q_i(\{i\})}+\frac{2}{1-\lambda}\mathbb{E}[R(\mbox{Phase 2})|A].
\end{align}
Finally, we note that $\mathbb{E}[\tau_i-\tau_{i-1}|A]\leq\mathbb{E}[\tau_i-\tau_{i-1}]= \frac{c_i}{q_i(\{i\})}$.\footnote{Note that conditioning on the event $A$ puts an upper bound on the random variable $\tau_i-\tau_{i-1}$, hence only resulting in a lower expectation.} If we combine this relation with the above inequality, we get
 \begin{align}\nonumber
\mbox{OPT}&\leq \sum_{i=1}^{h} \frac{(1+f(\lambda))c_i}{1-q_i(\{i\})}+\frac{2}{1-\lambda}\mathbb{E}[R(\mbox{Phase 2})|A]\leq  \frac{2}{1-\lambda}\mathbb{E}[R(\mbox{Alg})|A],
\end{align}
where the second inequality is due to \eqref{eq:alg-bound-A} and the fact that $1+f(\lambda)\leq \frac{2}{1-\lambda}, \forall \lambda\ge \frac{1}{2}$. 
\end{proof}

In the following, we prove the main result of this section, which is a constant competitive ratio for the online assortment problem. The main idea of the proof is to show that if Algorithm \ref{alg-main} does not get a chance to enter its second phase, the reason is that the number of buyers $m$ is small.  Otherwise, the revenue obtained from Phase 1 is sufficiently large compared to the optimal value of the LP \eqref{eq:LP-Game}. 

\begin{theorem}\label{them:online}
Algorithm \ref{alg-main} is a constant competitive algorithm for the online assortment problem with homogeneous buyers. In particular, $\frac{\mathbb{E}[R({\rm Alg})]}{{\rm OPT}}\ge 0.057$.   
\end{theorem}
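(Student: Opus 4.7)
\medskip

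\textbf{Proof proposal.} The plan is to bound $\mathrm{OPT}$ in terms of $\mathbb{E}[R(\mathrm{Alg})]$ by splitting the sample space into the complementary events $A$ and $\bar A$, establishing a Lemma~\ref{eq:OPT-ALG-A}--style inequality on each, combining them by the law of total expectation, and finally optimizing the resulting expression over $\lambda\in(\tfrac12,1)$. For the $A$ side we already have Lemma~\ref{eq:OPT-ALG-A}, which yields $\mathbb{P}(A)\,\mathrm{OPT}\le \frac{2}{1-\lambda}\mathbb{E}[R(\mathrm{Alg});A]$, so the essential new work is on $\bar A$.

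Fix $\omega\in\bar A$ and let $i^*(\omega)\in[h]$ be the unique index with $\tau_{i^*-1}(\omega)\le m<\tau_{i^*}(\omega)$, i.e.\ the heavy item being offered at time $m$. I would rerun the inductive peeling argument from the proof of Lemma~\ref{eq:OPT-ALG-A}, treating items $1,\dots,i^*-1$ exactly as before (each contributes a ``rounds $\times$ per-round OPT bound'' term $\frac{f(\lambda)q_j(\{j\})}{1-q_j(\{j\})}(\tau_j-\tau_{j-1})$ from Lemma~\ref{lemm-heavy-two} plus a capacity term $\frac{c_j}{1-q_j(\{j\})}$), and treating item $i^*$ on the residual window $(\tau_{i^*-1}(\omega),m]$ by Lemma~\ref{lemm-heavy-two} applied to assortments with heaviest surviving item $i^*$. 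Since on $\bar A$ the Phase~2 summand of Lemma~\ref{lemm:phase2-analysis} vanishes, and since $\frac{q_j(\{j\})}{1-q_j(\{j\})}$ is maximized at $j=1$ while $\sum_{j<i^*}(\tau_j-\tau_{j-1})+(m-\tau_{i^*-1})=m$, I obtain the clean per-path bound
\begin{align*}
\mathrm{OPT}\;\le\;\frac{f(\lambda)\,q_1(\{1\})}{1-q_1(\{1\})}\,m\;+\;R(\mathrm{Alg}\mid\omega),\qquad \omega\in\bar A,
\end{align*}
where I also use $R(\mathrm{Alg}\mid\omega)\ge\sum_{j<i^*}\frac{c_j}{1-q_j(\{j\})}$. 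Taking conditional expectation and multiplying by $\mathbb{P}(\bar A)$ gives $\mathbb{P}(\bar A)\,\mathrm{OPT}\le \frac{f(\lambda)q_1(\{1\})}{1-q_1(\{1\})}\,m\,\mathbb{P}(\bar A)+\mathbb{E}[R(\mathrm{Alg});\bar A]$.

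To eliminate the auxiliary quantity $m\,\mathbb{P}(\bar A)$, I would observe that on $\bar A$ Algorithm~\ref{alg-main} spends all $m$ rounds in Phase~1, each offering a heavy item with $q_{i(t)}(\{i(t)\})\ge\lambda$; conditioning on $\mathcal F_{t-1}$ and using the tower property yields $\mathbb{E}[R(\mathrm{Alg})]\ge\mathbb{E}[R(\text{Phase 1})]\ge \frac{\lambda}{1-\lambda}\mathbb{E}[\min(\tau_h,m)]\ge \frac{\lambda m\,\mathbb{P}(\bar A)}{1-\lambda}$, so $m\,\mathbb{P}(\bar A)\le \frac{(1-\lambda)\mathbb{E}[R(\mathrm{Alg})]}{\lambda}$. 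Adding the $A$-bound and the $\bar A$-bound and substituting gives an inequality of the form $\mathrm{OPT}\le \bigl(\frac{2}{1-\lambda}+\frac{f(\lambda)q_1(\{1\})(1-\lambda)}{(1-q_1(\{1\}))\lambda}\bigr)\mathbb{E}[R(\mathrm{Alg})]$, whose reciprocal I optimize over $\lambda\in(\tfrac12,1)$ to arrive at the claimed constant $0.057$.

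The main obstacle is the factor $\frac{q_1(\{1\})}{1-q_1(\{1\})}$ in the $\bar A$ bound, which is unbounded as $q_1(\{1\})\to 1$. I expect to handle this either by (i) sharpening the peeling step via Wald's identity applied to each stopping time $\tau_j-\tau_{j-1}$ so that the terms $\frac{q_j(\{j\})(\tau_j-\tau_{j-1})}{1-q_j(\{j\})}$ telescope item-by-item into $\frac{c_j}{1-q_j(\{j\})}$ rather than being crudely bounded by the $j=1$ term, or (ii) a case split: when $q_1(\{1\})$ is bounded away from $1$ the displayed inequality is already finite, and when $q_1(\{1\})\to 1$ item~$1$'s near-deterministic sale lets a trivial ``always offer item~$1$'' argument dominate both $\mathrm{OPT}$ and $\mathbb{E}[R(\mathrm{Alg})]$ so the ratio is close to $1$. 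Either route closes the argument; the numerical constant $0.057$ then comes from a one-variable optimization in $\lambda$.
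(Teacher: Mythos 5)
Your overall architecture (use Lemma~\ref{eq:OPT-ALG-A} on $A$, peel heavy items on the complement, combine by total expectation, optimize over $\lambda$) matches the paper's, but the handling of $\bar A$ has a genuine gap, and you have correctly located it yourself: the coefficient $\frac{q_1(\{1\})}{1-q_1(\{1\})}$ (more precisely $\frac{q_{i^*}(\{i^*\})}{1-q_{i^*}(\{i^*\})}$ for the last partially sold item) is unbounded, while your lower bound $\mathbb{E}[R(\mathrm{Alg})]\ge \frac{\lambda}{1-\lambda}m\,\mathbb{P}(\bar A)$ only sees the threshold $\lambda$, so the two sides do not cancel. Neither of your proposed repairs closes this. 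Route (i) fails because the per-round match between OPT's bound $\frac{f(\lambda)q_j(\{j\})}{1-q_j(\{j\})}$ and the algorithm's \emph{conditional expected} per-round revenue $\frac{q_j(\{j\})}{1-q_j(\{j\})}$ cannot be integrated against the indicator of $\bar A$: the event $\bar A=\{\tau_h>m\}$ is positively correlated with the buyers making \emph{few} purchases, so $\mathbb{E}[R(\mathrm{Alg});\bar A]$ can be far below $\mathbb{P}(\bar A)$ times the unconditional per-round expectation, and Wald/optional stopping does not survive the conditioning. Route (ii) is not available because the algorithm is fixed; you cannot switch to ``always offer item 1,'' and showing the ratio is close to $1$ when $q_1(\{1\})\to 1$ requires precisely the charging step you are missing.

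The paper's missing ingredient is a \emph{charging argument on a further restricted event}. It introduces $X(\omega)$, the number of heavy items sold in the first $m$ rounds, and the sub-event $B=\{\omega\in\bar A: X(\omega)>a\}$ for a parameter $a$. On $B$ the algorithm has provably sold at least $a$ items of heaviness at least $q_{i^*}(\{i^*\})$, each at price at least $\frac{1}{1-q_{i^*}(\{i^*\})}$, which yields the deterministic pathwise bound $\frac{1}{1-q_{i^*}(\{i^*\})}\le \frac{R(\mathrm{Alg}\mid\omega)}{X(\omega)}$; this is what tames the unbounded factor, giving $\mathrm{OPT}\le\bigl(\frac{f(\lambda)m+a}{a}\bigr)R(\mathrm{Alg}\mid\omega)$ on $B$. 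The leftover paths $\bar A\setminus B$ are simply discarded, and their probability is controlled by Markov's inequality applied to $m-X$ together with $\mathbb{E}[X]\ge\min\{\lambda m,c\}$, yielding $\mathbb{P}(B)\ge\mathbb{P}(\bar A)+\frac{\min\{\lambda m,c\}-m}{m-a}$ (and separately $\mathbb{P}(A)\ge\max\{0,1-\frac{c}{\lambda m}\}$). Consequently the final constant is not a one-variable optimization in $\lambda$: it is a minimax over $x=a/m$ and $y=c/m$ nested inside the maximization over $\lambda$, which is what produces $0.057$ at $\lambda=0.63$. Without the event $B$ and the $R(\mathrm{Alg}\mid\omega)/X(\omega)$ charging step, your argument does not yield any constant independent of the items' qualities.
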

\begin{proof}
Let $c=\sum_{i=1}^{h}c_i$ be the total number of heavy items, and define $B\subseteq \bar{A}$ to be the subset of sample paths in $\bar{A}$ that sell at least $a$ heavy items during Phase 1 (i.e.,  during the first $m$ time instances), for some $a\leq \min\{m,c\}$ to be determined later. More precisely, let $X:\Omega\to \mathbb{Z}_+$ be a random variable, where $X(\omega)$ denotes the number of heavy items sold during the first $m$ time instances of the sample path $\omega$, and note that $X\leq m$. Then, $B=\{\omega\in \Omega: X(\omega)>a, \tau_h(\omega)>m\}$, and we can write
\begin{align}\label{eq:B-bound-B}
\mathbb{P}(B)&=\mathbb{P}(\tau_h>m)\mathbb{P}(X>a|\tau_h>m)\cr 
&=\mathbb{P}(\bar{A})\Big(1-\mathbb{P}\big(m-X\ge m-a|\bar{A}\big)\Big)\cr
&\geq \mathbb{P}(\bar{A})\Big(1-\frac{\mathbb{E}[m-X|\bar{A}]}{m-a}\Big)\cr 
&\ge \mathbb{P}(\bar{A})\Big(1-\frac{\mathbb{E}[m-X]}{\mathbb{P}(\bar{A})(m-a)}\Big),
\end{align}   
where the first inequality is by Markov's inequality, and the second inequality uses $\mathbb{E}[m-X]\ge \mathbb{E}[m-X|\bar{A}]\mathbb{P}(\bar{A})$. On the other hand, we know that at each time a heavy item $i$ is offered, it is sold independently with a probability of at least $q_i(\{i\})\ge \lambda$. Therefore, $\mathbb{E}[X]\ge \min\{\lambda m,c\}$. By combining this relation with \eqref{eq:B-bound-B}, we obtain
\begin{align}\label{eq:A-B}
\mathbb{P}(B)\ge \mathbb{P}(\bar{A})+\frac{\min\{\lambda m,c\}-m}{m-a}.
\end{align}
Now consider an arbitrary sample path $\omega\in B$ that sells $X(\omega)\ge a$ heavy items during its first $m$ time instances. To upper-bound the revenue of OPT, let $i\in [h]$ be the last heavy item that is sold during the first $m$ time instances of $\omega$.  Using an argument similar to that in the proof of Lemma \ref{eq:OPT-ALG-A},  the revenue of OPT during time instances $[1, X(\omega)]$ can be upper-bounded using \eqref{eq:tau-OPT-bound} with the specific sample path $\hat{\omega}$ for which $\tau_0(\hat{\omega})=0,  \tau_j(\hat{\omega})=\tau_{j-1}(\hat{\omega})+c_j\ \forall j\in [i-1],  \tau_i(\hat{\omega})=X(\omega)$, to obtain
\begin{align}\nonumber
\sum_{j=1}^{i-1}\frac{f(\lambda)q_j(\{j\})}{1-q_j(\{j\})}c_j+\frac{f(\lambda)q_i(\{i\})}{1-q_i(\{i\})}(X(\omega)-\sum_{j=1}^{i-1}c_j)+\sum_{j=1}^{i-1} \frac{c_j}{1-q_j(\{j\})}< (1+f(\lambda))R(\mbox{Alg}|\omega),
\end{align}
where the inequality holds because $R(\mbox{Alg}|\omega)= \sum_{j=1}^{i-1}\frac{c_j}{1-q_j(\{j\})}\!+\!\frac{X(\omega)-\sum_{j=1}^{i-1}c_j}{1-q_i(\{i\})}$,\footnote{Note that $R(\mbox{Alg}|\omega)$ is only determined by the first $m$ time instances of $\omega$. Moreover, since $\omega\in B$, the algorithm does not even get a chance to enter its second phase to sell light items.} and $q_j(\{j\})<1$. We can then remove the heavy items $1,2,\ldots,i-1$ from all the assortments offered by the OPT after time $X(\omega)$, in which case the remaining assortments offered by OPT over $[X(\omega)+1, m]$ can only contain item $i$ or lighter items. Based on Lemma \ref{lemm-heavy-two}, that means that the revenue of each remaining assortment in OPT is at most $f(\lambda)$ times  the revenue of the individual assortment $\{i\}$. Thus, the remaining revenue obtained by OPT over $[X(\omega)+1, m]$ is at most $(m-X(\omega))\frac{f(\lambda)q_i(\{i\})}{1-q_i(\{i\})}$. Moreover, as Algorithm \ref{alg-main} sells $X(\omega)$ items of type $i$ or heavier during the first $m$ time instances of $\omega$, we have $\frac{1}{1-q_i(\{i\})}\leq \frac{R({\rm Alg}|\omega)}{X(\omega)}$. Putting it all together, for any $\omega\in B$, we have shown that
\begin{align}\nonumber
\mbox{OPT}&\leq (1+f(\lambda)) R(\mbox{Alg}|\omega)+(m-X(\omega))\frac{f(\lambda)R(\mbox{Alg}|\omega)}{X(\omega)}\cr 
&=(\frac{f(\lambda)m+X(\omega)}{X(\omega)}) R(\mbox{Alg}|\omega)\leq(\frac{f(\lambda)m+a}{a}) R(\mbox{Alg}|\omega),
\end{align}  
where the last inequality holds because $X(\omega)\ge a, \forall \omega\in B$.  If we take the conditional expectation $\mathbb{E}[\cdot|B]$ from both sides of the above inequality, we get $\mbox{OPT}\leq (\frac{f(\lambda)m+a}{a}) \mathbb{E}[R(\mbox{Alg})|B]$. Using this relation together with Lemma \ref{eq:OPT-ALG-A} and \eqref{eq:A-B}, we get
\begin{align}\label{eq:OPT-Alg-Big}
\mathbb{E}[R(\mbox{Alg})]&\ge \mathbb{E}[R(\mbox{Alg})|A]\cdot \mathbb{P}(A)+\mathbb{E}[R(\mbox{Alg})|B]\cdot \mathbb{P}(B)\cr 
&\ge  (\frac{1-\lambda}{2})\mbox{OPT}\cdot \mathbb{P}(A)+\frac{\mbox{OPT}}{(\frac{f(\lambda)m+a}{a})}\cdot\mathbb{P}(B)\cr 
&\geq \Big( (\frac{1-\lambda}{2})\mathbb{P}(A)+\frac{\mathbb{P}(\bar{A})+\frac{\min\{ \lambda m,c\}-m}{m-a}}{(\frac{f(\lambda)m+a}{a})}\Big)\mbox{OPT}\cr 
&=\Big(\mathbb{P}(A)(\frac{1-\lambda}{2}-\frac{a}{f(\lambda)m+a})+\frac{(\min\{\lambda m,c\}-\!a)a}{(f(\lambda)m+a)(m-a)}\Big)\mbox{OPT}\cr 
&=\Big(\mathbb{P}(A)(\frac{1-\lambda}{2}-\frac{x}{f(\lambda)+x})+\frac{(\min\{\lambda,\frac{c}{m}\}-x)x}{(f(\lambda)+x)(1-x)}\Big)\mbox{OPT},
\end{align}
where $x=\frac{a}{m}\ge 0$. Moreover, by Markov's inequality, 
\begin{align}\label{eq:Markov-chernof}
\mathbb{P}(\bar{A})&=\mathbb{P}(\tau_h>m)\leq \frac{\mathbb{E}[\tau_h]}{m}=\frac{\sum_{i=0}^h\mathbb{E}[\tau_i-\tau_{i-1}]}{m}=\sum_{i=0}^h\frac{c_i}{mq_i(\{i\})}\leq \frac{c}{\lambda m},
\end{align}
and thus $\mathbb{P}(A)\ge \max\{0, 1-\frac{c}{\lambda m}\}$. Let $y=\frac{c}{m}\ge 0$, and note that $x\leq \min\{y,1\}$, depending on whether $c\leq m$ or $c>m$.  If we substitute the bound on $\mathbb{P}(A)$ into \eqref{eq:OPT-Alg-Big} and maximize the result over $x\leq \min\{y,1\}$ while minimizing it over $y>0$, the competitive ratio of the algorithm for a fixed threshold $\lambda$ is at least
\begin{align}\nonumber
g(\lambda)=\min_{y\ge 0}\ \max_{0\leq x\leq \min\{y,1\}}\Big\{\max\{0, 1-\frac{y}{\lambda}\}\big(\frac{1-\lambda}{2}-\frac{x}{f(\lambda)+x}\big)+\frac{(\min\{\lambda,y\}-x)x}{(f(\lambda)+x)(1-x)}\Big\}.
\end{align}  
Finally, by maximizing $g(\lambda)$ over $\frac{1}{2}\leq \lambda\leq 1$, we get $\max_{\lambda\in [\frac{1}{2}, 1]}g(\lambda)\ge 0.057$, that is obtained for $\lambda=0.63$ (see Appendix I). This shows that the competitive ratio of the Algorithm \ref{alg-main} with threshold  $\lambda=0.63$ is at least $\frac{\mathbb{E}[R({\rm Alg})]}{{\rm OPT}}\ge 0.057$. 
\end{proof}

In the proof of Theorem \ref{them:online}, one could further leverage the i.i.d. property of buyers and use tighter Chernoff bounds rather than Markov's inequality in \eqref{eq:Markov-chernof} to improve the competitive ratio. However, for the sake of simplicity, we did not follow that path. Instead, in Appendix II, we have conducted some numerical experiments to illustrate the outperformance of the hybrid algorithm beyond the theoretical guarantee of Theorem \ref{them:online}. One advantage of using Markov's inequality rather than Chernoff bound in our analysis is that it can be used to analyze the competitive ratio even if there are statistical correlations among buyers. In fact, the above analysis of the hybrid Algorithm \ref{alg-main} holds even under a more general setting as long as equilibrium demands and revenues satisfy the substitutability property given in Lemma \ref{lemm:subtitute} and the revenue approximation given in Lemma \ref{lemm-heavy-two}.


Finally, we mention a potential extension of our results to \emph{perishable} items when sellers have time limits to sell their items.  For instance, one can consider a scenario in which the sellers lose an item even if the buyer did not buy their items. One way of modeling such a setting is to modify the sellers' revenues to account for the negative effect of losing items even if they are not purchased.  Given an item $i$, let $\beta_i$ be the fixed cost of producing it. Then, seller $i$'s revenue can be computed as $\bar{R}_i(S)=\bar{p}_i(S)\bar{q}_i(S)-\beta_i(1-\bar{q}_i(S))$, where $S$ is the assortment offered by the platform, $\bar{p}_i(S)$ is the offered price for item $i$, and $\bar{q}_i(S)=\frac{\exp(\theta_i-\bar{p}_i(S))}{\sum_{j\in S\cup\{0\}}\exp(\theta_j-\bar{p}_j(S))}$ is the purchase probability of item $i$. Here, the first term $\bar{p}_i(S)\bar{q}_i(S)$ captures the revenue of selling item $i$ and the negative term $-\beta_i(1-\bar{q}_i(S))$ captures the expected loss of the seller if item $i$ was not purchased. The reason is that seller $i$ incurs a cost of $\beta_i$ for producing a unit of item $i$,  and if the item was not sold to the buyer (which happens with probability $1-\bar{q}_i(S)$), then the expected loss for the seller would be $\beta_i(1-\bar{q}_i(S))$. Now, one can again compute the equilibrium prices for this modified revenue function and show that equilibrium prices always exist and are given by $p_i(S)=\frac{1}{1-q_i(S)}-\beta_i$, where $q_i(S)$ is the equilibrium demand for item $i$ given the assortment $S$.\footnote{Intuitively, If a seller knows that he will lose his item regardless of whether it is purchased or not, he is willing to offer that item at a lower price to increase the chance of selling it.} In particular, the stage equilibrium revenue for seller $i\in [n]$ becomes $R_i(S)=\frac{q_i(S)}{1-q_i(S)}-\beta_i$. As a result, the platform's objective function in this new setting differs from the original objective function \eqref{eq:rev-objective} by only an additive constant $-m\sum_i\beta_i$. Therefore, from an optimization perspective, the online platform is essentially solving the same optimization problem as before, and the competitive ratio analysis of Algorithm \ref{alg-main} remains valid. 


\section{Offline Generalized Bertrand Game with Heterogeneous Buyers}\label{sec-offline}

As we showed earlier, the online assortment problem with heterogeneous buyers does not admit a constant competitive algorithm. To compensate for the nonconstant competitive ratio of heterogeneous buyers, one might consider identical buyers that choose according to more general choice models, or assume heterogeneous buyers whose evaluations belong to specific distributions. However, in this section, we take an alternative approach, and consider an offline market with arbitrary heterogeneous buyers. Analyzing such a generalized offline market is important for several reasons. i) It provides a solution to the case in which multiple heterogeneous buyers simultaneously arrive in the market. ii) While offline markets with multiple buyers/sellers under oligopolistic competitions, such as network Cournot competition, have been well studied \cite{bimpikis2019cournot,lin2017networked,abolhassani2014network}, the existing results for Bertrand competitions over general bipartite networks are very limited. iii) Finally, analyzing the optimal segmentation of the generalized Bertrand game allows the platform to improve its revenue by effectively clustering heterogeneous buyers into consistent classes and matching them to their favorite sellers, hence improving buyers' satisfaction.

In the offline generalized Bertrand game, we again assume that there is a set of $[n]$ sellers in which seller $i$ has $c_i\in \mathbb{Z}_+$ units of product $i$. Moreover, there is a set of $[m]$ buyers, in which buyer $k$'s evaluations of product qualities are given by real numbers $\theta_{1k},\ldots,\theta_{nk}$. As before, we also consider a ``no-purchase" item $0$ with a normalized price $p_0=0$, and it is assumed that all buyers' evaluations for the no-purchase item are $\theta_{0k}=0\ \forall k\in [m]$. Now, if we use $\boldsymbol{p}=(p_1,\ldots,p_n)\in\mathbb{R}^n_+$ to denote the sellers' posted prices, the expected utility derived by seller $i$ is given by
\begin{align}\label{eq:offline-utility} 
\mathcal{U}_i(p_i,p_{-i})=p_i\min\{\sum_{k=1}^{m}q_{ik}, c_i\},
\end{align} 
where $q_{ik}=\frac{\exp(\theta_{ik}-p_i)}{1+\sum_{j=1}^n\exp(\theta_{jk}-p_{j})}$ is the MNL probability that buyer $k$ will purchase item $i$ (i.e., expected demand). Note that since seller $i$ has at most $c_i$ units of item $i$, the expected revenue \eqref{eq:offline-utility} that seller $i$ can derive at a price $p_i$ is at most $p_ic_i$, even though he or she may receive more demand than the inventory $c_i$. Therefore, the utility functions in \eqref{eq:offline-utility} define a noncooperative game among the sellers in which each seller wants to set a price for his or her item to maximize the expected utility. 

Unlike the single-buyer Bertrand game ($m=1$), which is a potential game (see Remark \ref{rem:potential}) and hence admits a pure Nash equilibrium \cite{monderer1996potential}, the above generalized Bertrand game does not seem to admit a potential function. Therefore, an immediate question concerning this generalized Bertrand game is whether it admits a \emph{pure-strategy} Nash equilibrium. In the following, we will show that under a mild assumption on the buyers' evaluations, the generalized Bertrand game admits a pure-strategy Nash equilibrium over arbitrary bipartite graphs.
  
\begin{assumption}\label{ass:consistent}
The generalized Bertrand game is called \emph{consistent} if sellers' posted prices can incentivize a buyer by up to $91\%$ to buy any certain product.
\end{assumption}

Assumption \ref{ass:consistent} implies that buyers' evaluations of product qualities are consistent and lie within a certain range of each other. This is a reasonable assumption for several reasons, particularly in platforms with side information (e.g., Amazon.com) such that buyers have access to product reviews. This reason is that if a product truly has a certain quality, it is unlikely that one buyer will evaluate its quality extremely high while the others evaluate its quality extremely low. Moreover, buyers are often not fully determined to buy a product and merely explore the market for suitable alternatives. As a result, sellers' posted prices cannot incentivize a buyer $100\%$ to buy a product. For instance, one way to assure that Assumption \ref{ass:consistent} holds is to assume that buyers' evaluations are bounded above by $\theta_{ik}\leq 2.3, \forall i, k$. In that case, the demand of buyer $k$ for item $i$ is at most $q_{ik}=\frac{\exp(\theta_{ik}-p_i)}{1+\sum_{j=1}^n\exp(\theta_{jk}-p_{j})}\leq \frac{\exp(\theta_{ik})}{1+\exp(\theta_{ik})}\leq 0.91$. Thus, regardless of posted prices, a buyer will not purchase an item with a probability of more than $91\%$ from any seller.

\begin{theorem}\label{thm:pure}
Under market consistency Assumption \ref{ass:consistent}, the generalized Bertrand game over general bipartite graphs admits a pure-strategy Nash equilibrium. 
\end{theorem}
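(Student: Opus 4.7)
The plan is to invoke the Debreu--Glicksberg--Fan existence theorem for continuous games with quasi-concave payoffs, whose three prerequisites are (i) compact convex strategy spaces, (ii) joint continuity of every $\mathcal{U}_i$, and (iii) quasi-concavity of $\mathcal{U}_i(\cdot,p_{-i})$ in the own-price $p_i$. I would first restrict each $p_i$ to a compact interval $[0,M]$; this is harmless because the exponential factor $e^{-p_i}$ in every $q_{ik}$ forces $p_i q_{ik}\to 0$ as $p_i\to\infty$, so $\mathcal{U}_i\to 0$, and for $M$ large enough no best response will ever leave $[0,M]$. Joint continuity on $[0,M]^n$ is immediate from the smoothness of the MNL probabilities and of the $\min$ operator.

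The technical heart is condition (iii). Fix $p_{-i}$ and write $D_i(p_i)=\sum_k q_{ik}(p_i)$ and $R_i(p_i)=p_i D_i(p_i)$, so that $\mathcal{U}_i(p_i)=\min\{p_i c_i,R_i(p_i)\}$. Since $D_i$ is strictly decreasing in $p_i$, the set $\{p_i:D_i(p_i)\ge c_i\}$ is an interval of the form $[0,\hat p_i]$; on $[0,\hat p_i]$ one has $\mathcal{U}_i=p_i c_i$ (linear increasing), on $[\hat p_i,M]$ one has $\mathcal{U}_i=R_i(p_i)$, and the two pieces coincide at the junction $\hat p_i c_i=R_i(\hat p_i)$. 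Hence $\mathcal{U}_i$ is quasi-concave provided $R_i$ is quasi-concave on $[\hat p_i,M]$, which follows if $R_i$ is unimodal on $[0,M]$.

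The main obstacle is precisely this unimodality of $R_i$. A direct second-derivative computation shows that each individual summand $p_iq_{ik}$ is log-concave in $p_i$ since
\begin{equation}
\frac{\partial^2 \log(p_iq_{ik})}{\partial p_i^2}=-\frac{1}{p_i^2}-q_{ik}(1-q_{ik})<0,
\end{equation}
yet sums of log-concave functions need not even be quasi-concave. I would instead work directly with the explicit derivative
\begin{equation}
R_i'(p_i)=\sum_k q_{ik}\bigl(1-p_i(1-q_{ik})\bigr),
\end{equation}
and show that it has a single sign change on $(0,M]$. This is exactly where market consistency (Assumption~\ref{ass:consistent}) enters: the uniform bound $q_{ik}\le 0.91$ caps the per-summand turning points $1/(1-q_{ik})$ in a controlled range and keeps the negative contributions in the second-derivative expression dominant throughout the relevant interval, preventing buyers with heterogeneous $q_{ik}$ from generating a second critical point. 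The sharp threshold $0.91$ presumably arises as the tightest constant under which this domination estimate is self-consistent.

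Once unimodality of $R_i$ (and thus quasi-concavity of $\mathcal{U}_i$) is secured, Debreu--Glicksberg--Fan---or equivalently Kakutani applied to the best-response correspondence, which is nonempty, convex-valued, and upper hemicontinuous on the compact product domain---yields a pure-strategy Nash equilibrium of the generalized Bertrand game over any buyer--seller bipartite graph, completing the proof.
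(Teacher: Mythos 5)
Your overall architecture matches the paper's: both proofs reduce existence to quasi-concavity of each seller's payoff in its own price (together with continuity and compact convex strategy sets) and then invoke a standard existence theorem for quasi-concave games, and both reduce the quasi-concavity of $\mathcal{U}_i=\min\{p_ic_i,\,R_i(p_i)\}$ to that of the unconstrained revenue $R_i(p_i)=p_i\sum_k q_{ik}$ (the paper does this in one line, since a pointwise minimum of quasi-concave functions is quasi-concave). The problem is that you stop exactly at the step that carries the entire technical content of the theorem. You correctly identify that one must show $R_i'(p_i)=\sum_k q_{ik}\bigl(1-p_i(1-q_{ik})\bigr)$ has a single sign change and that per-summand log-concavity does not suffice, but you then assert that the bound $q_{ik}\le 0.91$ ``presumably arises as the tightest constant under which this domination estimate is self-consistent.'' That is a conjecture about how the proof should go, not a proof: nothing in your text rules out a second critical point when the $q_{ik}$ are heterogeneous, and the specific role of the constant $0.91$ is left unexplained.

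The paper closes this gap with a concrete second-order computation. Using the differentiable characterization of quasi-concavity, it suffices to show $\partial^2 u_i/\partial p_i^2<0$ at every critical point $p_i=P/(P-Q)$, where $P=\sum_k q_{ik}$, $Q=\sum_k q_{ik}^2$, $R=\sum_k q_{ik}^3$. Substituting the critical price into the second derivative and simplifying yields
\begin{equation*}
\frac{\partial^2 u_i}{\partial p_i^2}=\frac{-1}{P-Q}\left(P^2-PQ+2Q^2-2PR\right),
\end{equation*}
and Assumption \ref{ass:consistent} enters only through the inequality $R\le 0.91\,Q<\frac{2\sqrt{2}-1}{2}Q$, which gives $P^2-PQ+2Q^2-2PR>(P-\sqrt{2}Q)^2\ge 0$; since $P>Q$, the second derivative is strictly negative. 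So the threshold is not some tight self-consistency constant for a domination estimate on $R_i'$: it is any number strictly below $(2\sqrt{2}-1)/2\approx 0.914$, chosen so that the completed square argument goes through strictly. Without this (or an equivalent) computation your argument does not establish quasi-concavity, the existence theorem cannot be applied, and the proof is incomplete.
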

\begin{proof}
First, let us assume that the Bertrand game is captured by a complete bipartite graph, meaning that every seller $i\in [n]$ is visible to every buyer $k\in [m]$ with a quality evaluation $\theta_{ik}$. We will show that the utility function of each player $i\in [n]$ is quasiconcave with respect to its own decision variable $p_i$. This quasiconcavity, in view of \cite{baye1993characterizations} and the fact that the utility functions $\mathcal{U}_i$ are continuous and players' strategy sets are convex and compact, implies that the generalized Bertrand game admits a pure-strategy Nash equilibrium.\footnote{Although the strategy set of a player is $[0, \infty)$, it can be shown that without loss of generality, all the players must choose their prices in the compact set $[0, \theta]$, where $\theta=\max_{i,k}|\theta_{ik}|$.} 

To establish the quasiconcavity of the utilities, it is enough to show that $u_i(p_i,p_{-i})=p_i\sum_{k=1}^{m}q_{ik}$ is a quasiconcave function of $p_i$. As $p_ic_i$ is a linear (and hence quasiconcave) function and the pointwise minimum of two functions preserves quasiconcavity \cite{boyd2004convex}, we conclude that $\mathcal{U}_i(p_i,p_{-i})$ is also quasiconcave in $p_i$. Therefore, in order to show the quasiconcavity of $u_i(p_i,p_{-i})$ we use the equivalent condition for differentiable functions \cite{boyd2004convex} to show that for every fixed $p_{-i}$, if $\frac{\partial u_i}{\partial p_i}(p_i,p_{-i})=0$, then $\frac{\partial^2 u_i}{\partial p^2_i}(p_i,p_{-i})< 0$. By the definition of choice probabilities, a simple calculation shows that
\begin{align}\nonumber
\frac{\partial q_{jk}}{\partial p_i}=\begin{cases}
q_{ik}^2-q_{ik} \ &\mbox{if }\ \ j=i,\\
q_{ik}q_{jk} \ &\mbox{if }\ \ j\neq i.
\end{cases}
\end{align}
Therefore, 
\begin{align}\label{eq:derivatives}
&\frac{\partial u_i}{\partial p_i}=\sum_{k=1}^m\big(q_{ik}-p_i(q_{ik}-q^2_{ik})\big),\cr 
&\frac{\partial^2 u_i}{\partial p^2_i}=\sum_{k=1}^m(q_{ik}^2-q_{ik})(2+2p_iq_{ik}-p_i).
\end{align}
Let $p_i=\frac{\sum_{k=1}^mq_{ik}}{\sum_{k=1}^{m}(q_{ik}-q^2_{ik})}$ be the solution to $\frac{\partial u_i}{\partial p_i}=0$. If we put that relation into \eqref{eq:derivatives}, and define $P=\sum_{\ell=1}^mq_{i\ell}$, $Q=\sum_{\ell=1}^mq^2_{i\ell}$, and $R=\sum_{\ell=1}^mq^3_{i\ell}$, we can write
\begin{align}\label{eq:P-Q}
\frac{\partial^2 u_i}{\partial p^2_i}&=\sum_{k=1}^m(q_{ik}^2-q_{ik})\Big(2+2\frac{\sum_{\ell=1}^mq_{i\ell}}{P-Q}q_{ik}-\frac{\sum_{\ell=1}^mq_{i\ell}}{P-Q}\Big)\cr 
&=\sum_{k=1}^m\frac{q_{ik}-q_{ik}^2}{Q-P}\Big(2\sum_{\ell=1}^{m}(q_{i\ell}-q^2_{i\ell})+2\sum_{\ell=1}^mq_{i\ell}q_{ik}-\sum_{\ell=1}^mq_{i\ell}\Big)\cr 
&=\frac{-1}{P-Q}\sum_{k=1}^m\sum_{\ell=1}^{m}(q_{ik}-q_{ik}^2)(q_{i\ell}+2q_{i\ell}q_{ik}-2q^2_{i\ell})\cr 
&=\frac{-1}{P-Q}\sum_{k=1}^m\sum_{\ell=1}^{m}q_{ik}q_{i\ell}(1+q_{ik}-2q_{i\ell}+2q_{ik}q_{i\ell}-2q_{ik}^2)\cr 
&=\frac{-1}{P-Q}\Big(P^2+PQ-2PQ+2Q^2-2PR\Big)\cr 
&=\frac{-1}{P-Q}\Big(P^2-PQ+2Q^2-2PR\Big).
\end{align}  
As $q_{i\ell}\in (0,1), \forall i, \ell$, we have $R< Q<P$. Moreover, by Assumption \ref{ass:consistent}, no player $i$ dominates the market by taking more than $91\%$ of the demand of any buyer, i.e., $q_{i\ell}<0.91, \forall i, \ell$. That implies that $R\leq 0.91Q<\frac{2\sqrt{2}-1}{2} Q$. Thus,
\begin{align}\label{eq:P-Q-app}
P^2\!-\!PQ\!+\!2Q^2\!-\!2PR&> P^2\!-\!PQ\!+\!2Q^2\!-\!(2\sqrt{2}\!-\!1)PQ=(P-\sqrt{2}Q)^2\ge 0.
\end{align}
If we use \eqref{eq:P-Q-app} in \eqref{eq:P-Q}, it is easy to see that $\frac{\partial^2 u_i}{\partial p^2_i}<0$, which shows that $u_i(p_i,p_{-i})$ is a quasiconcave function of $p_i$.

Finally, if the market is not captured by a complete bipartite graph such that seller $i$ is visible only to a subset $N_i\subset [m]$ of buyers, then one can carry over all the above analysis by replacing the above summations over $k\in [m]$ with the summations over $k\in N_i$ (or, by assuming $\theta_{ik}\to -\infty$ for every pair of buyer-seller $(k,i)$ that are not visible to each other).
\end{proof}

Unfortunately, because of a highly nonlinear structure of the choice probabilities together with the sellers' capacity constraints, it likely no possible to obtain a closed-form solution for the equilibrium prices of the generalized Bertrand game. However, it is known that quasiconcave games with a unique Nash equilibrium point offer many nice properties, and that lots of simple iterative learning and adjustment rules converge to that equilibrium \cite{rosen1965existence,even2009convergence}. Finally, we would like to mention that the existence of capacity constraints can only push the equilibrium prices higher than they are in the uncapacitated case. The reason is that given an equilibrium price vector $\boldsymbol{p}$, if item $i$ does not have a capacity constraint, then the equilibrium price for that item is $p_i=\frac{\sum_{k=1}^mq_{ik}}{\sum_{k=1}^{m}(q_{ik}-q^2_{ik})}$. On the other hand, if we impose a capacity of $c_i$ on item $i$, either $\sum_{k=1}^{m}q_{ik}\leq  c_i$, in which case the equilibrium price $p_i$ remains as in the uncapacitated case, or $\sum_{k=1}^{m}q_{ik}> c_i$. In the latter case, seller $i$ can strictly increase its utility from $p_ic_i$ to some $(p_i+\epsilon) c_i$ by increasing its price to match its supply $c_i$ with its demand $\sum_{k=1}^{m}q_{ik}$.\footnote{Note that such a matching is possible because $q_{ik}, k\in [m]$ are monotonically decreasing and are continuous functions of the price $p_i$.} Therefore, we have the following corollary:
\begin{corollary}\label{corr:capacity}
At any pure Nash equilibrium of the generalized Bertrand game, the total demand received by a seller $i$ is at most $c_i$, i.e., $\sum_{k=1}^{m}q_{ik}\leq c_i$.
\end{corollary}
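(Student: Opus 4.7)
The plan is to argue by contradiction using a simple price-increase deviation, exactly along the lines informally sketched just before the corollary statement. Suppose that $\boldsymbol{p}^*=(p_1^*,\ldots,p_n^*)$ is a pure Nash equilibrium of the generalized Bertrand game and that, for some seller $i$, the aggregate demand strictly exceeds capacity, i.e.\ $\sum_{k=1}^m q_{ik}(\boldsymbol{p}^*)>c_i$. Under this hypothesis seller $i$'s utility at equilibrium is $\mathcal{U}_i(p_i^*,p_{-i}^*)=p_i^*\min\{\sum_k q_{ik}, c_i\}=p_i^* c_i$, since the min is attained by $c_i$.

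Next I would exploit the regularity of the MNL demand. For each buyer $k$, the function $p_i\mapsto q_{ik}(p_i,p_{-i}^*)=\frac{\exp(\theta_{ik}-p_i)}{1+\sum_{j=1}^n\exp(\theta_{jk}-p_j^*)}$ is continuous and strictly decreasing in $p_i$, so the sum $p_i\mapsto \sum_{k=1}^m q_{ik}(p_i,p_{-i}^*)$ is also continuous and strictly decreasing. Using the strict inequality $\sum_k q_{ik}(\boldsymbol{p}^*)>c_i$ and continuity, I can pick $\varepsilon>0$ small enough that $\sum_{k=1}^m q_{ik}(p_i^*+\varepsilon,p_{-i}^*)\geq c_i$ still holds.

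For such an $\varepsilon$, seller $i$'s utility at the deviated price is
\[
\mathcal{U}_i(p_i^*+\varepsilon,p_{-i}^*)=(p_i^*+\varepsilon)\min\Big\{\sum_{k=1}^m q_{ik}(p_i^*+\varepsilon,p_{-i}^*),\ c_i\Big\}=(p_i^*+\varepsilon)c_i>p_i^* c_i=\mathcal{U}_i(p_i^*,p_{-i}^*),
\]
contradicting the Nash equilibrium condition that $p_i^*$ is a best response to $p_{-i}^*$. Hence no such $i$ can exist and $\sum_{k=1}^m q_{ik}\leq c_i$ for all $i$ at every pure Nash equilibrium.

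I do not anticipate any genuine obstacle: the argument is a one-step profitable deviation enabled by the monotonicity and continuity of the MNL demand in the seller's own price. The only minor subtlety worth flagging is that one should pick $\varepsilon$ small enough that the demand remains at least $c_i$ (so that the $\min$ in the utility continues to be clipped at the capacity); this is exactly what continuity of $\sum_k q_{ik}(\cdot,p_{-i}^*)$ at $p_i^*$ provides.
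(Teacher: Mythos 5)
Your proof is correct and follows essentially the same route as the paper: the text immediately preceding the corollary argues that if $\sum_{k}q_{ik}>c_i$ at equilibrium, seller $i$ can strictly improve from $p_ic_i$ to $(p_i+\epsilon)c_i$ by raising its price, with the footnote invoking exactly the continuity and monotonicity of $q_{ik}$ in $p_i$ that you make explicit. Your write-up merely formalizes the choice of $\varepsilon$, which is a welcome but not substantively different addition.
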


Unfortunately, characterizing the amount of increase of equilibrium prices caused by capacity constraints is a complicated function of all other parameters, and that makes it difficult for to obtain a closed-form solution. Therefore, in the next section, we take a different approach to improving the revenue obtained at a Nash equilibrium of the generalized Bertrand game, by segmenting the market into smaller submarkets with easily computable equilibrium prices.

\subsection{Optimal Segmentation of the Generalized Bertrand Game}
In this section, we consider optimal market segmentation for the generalized Bertrand game under consistency Assumption \ref{ass:consistent} (which guarantees the existence of a pure Nash equilibrium). In the market-segmenting problem, the goal is to partition the set of buyers/sellers into smaller pools $\{\mathcal{P}_r\}_{r\in I}$ in which the sellers in each pool $\mathcal{P}_r$ are visible only to the buyers of the same pool. In particular, we are interested in a partitioning that achieves the maximum revenue $\sum_{r\in I}\mbox{Rev}(\mathcal{P}_r)$, where $\mbox{Rev}(\mathcal{P}_r)$ denotes the revenue obtained from the sellers in $\mathcal{P}_r$, given that each pool $\mathcal{P}_r$ is operating at its equilibrium prices and demands. In fact, it has been shown that in certain markets with simpler exogeneous supply-demand curves, segmenting the market into smaller pools can significantly improve the revenue/welfare derived from those markets \cite{banerjee2017segmenting}. In fact, market segmentation can be viewed as an extension of the single-assortment problem to a multi-assortment problem. More precisely, instead of having the platform to offer only one assortment to all the buyers, the platform first partitions the buyers into different clusters based on their preferences and then offers a distinct assortment to each cluster.  Following that idea, we consider the optimal segmenting problem for the generalized Bertrand game and provide a simple approximation algorithm for its optimal segmentation.

\begin{theorem}\label{thm:pool}
Under market consistency Assumption \ref{ass:consistent} and bounded evaluations $\theta_{ik}\in [0, \theta] \ \forall i,k$, where $\theta$ is a constant, there is an $O(\log m)$-approximation for the optimal segmentation of the generalized Bertrand game.
\end{theorem}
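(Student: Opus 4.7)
The plan is to combine an additive upper bound on $\mathrm{OPT}$ with a capacity-bucketing argument that loses a factor of $O(\log m)$ via pigeonhole, plus a constant-factor subroutine inside a single bucket. First, I would derive a per-seller decomposition of the revenue of \emph{any} segmentation at equilibrium. In a pool $\mathcal{P}_r=(S_r,B_r)$ the revenue equals $\sum_{i\in S_r}\sum_{k\in B_r} p_i^{eq} q_{ik}^{eq}$, and the simple calculus identity $\max_{p\ge 0}\, p\, e^{\theta-p}=e^{\theta-1}$ gives $p_i^{eq} q_{ik}^{eq}\le e^{\theta-1}$ pair-by-pair under the bounded-evaluation hypothesis $\theta_{ik}\in[0,\theta]$. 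Combining this with the capacity constraint $\sum_{k\in B_r}q_{ik}^{eq}\le c_i$ from Corollary \ref{corr:capacity} and the per-buyer normalization $\sum_{i} q_{ik}^{eq}\le 1$, I would obtain a clean upper bound of the form $\mathrm{OPT}\le C\sum_{i=1}^n\min\{c_i,m\}$ for a constant $C=C(\theta)$, which decomposes additively across sellers in terms of the effective capacity $\min\{c_i,m\}$.

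Next, I would partition the sellers into capacity classes $\mathcal{C}_j=\{i:\,2^j\le\min\{c_i,m\}<2^{j+1}\}$ for $j=0,1,\ldots,\lceil\log_2 m\rceil$, giving $O(\log m)$ classes. By pigeonhole one class $\mathcal{C}_{j^\ast}$ contributes at least a $1/O(\log m)$ fraction of the additive upper bound on $\mathrm{OPT}$. For that class I construct a segmentation in which each seller $i\in\mathcal{C}_{j^\ast}$ is placed in its \emph{own} singleton-seller pool together with $\Theta(2^{j^\ast})$ buyers chosen by a bipartite $b$-matching that respects the global assignment of each buyer to at most one pool. Each resulting pool is a single-seller, multi-buyer Bertrand market whose equilibrium is one-dimensional (the seller maximizes $p\sum_k q_{ik}(p)$ subject to the capacity $c_i$), so both the equilibrium price and revenue admit closed-form expressions, and the output is the best segmentation among the $O(\log m)$ classes.

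The main obstacle will be showing that the bipartite matching within the chosen class loses only a constant factor, so that the overall loss is the claimed $O(\log m)$ and not more. Concretely, I need to argue that matching $\Theta(2^{j^\ast})$ buyers to each seller $i\in\mathcal{C}_{j^\ast}$ yields per-seller revenue $\Omega(\min\{c_i,m\})$, matching the additive upper bound up to a constant. I expect to handle this by writing an LP relaxation of the buyer-to-seller assignment problem (with a fractional revenue proxy based on per-pair isolated revenues $R_{ik}^{\mathrm{solo}}=V(e^{\theta_{ik}-1})/(1-V(e^{\theta_{ik}-1}))$), rounding it while exploiting the fact that sellers within $\mathcal{C}_{j^\ast}$ have geometrically comparable capacities, and finally invoking the quasi-concavity from Theorem \ref{thm:pure} together with the substitutability property of Lemma \ref{lemm:subtitute} to translate the LP value into actual equilibrium revenue at each constructed pool.
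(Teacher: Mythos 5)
There is a genuine gap, and it sits at the very first step: the claimed upper bound $\mathrm{OPT}\le C\sum_{i=1}^n\min\{c_i,m\}$ with a constant $C=C(\theta)$ is false. Consider a single seller with $c_1=1$ facing $m$ buyers, all with $\theta_{1k}=\theta$. At equilibrium the seller raises its price until demand meets capacity (cf.\ Corollary \ref{corr:capacity}), i.e.\ $m\,\frac{e^{\theta-p}}{1+e^{\theta-p}}=1$, giving $p=\theta+\ln(m-1)$ and revenue $p\cdot c_1=\Theta(\ln m)$, while $\sum_i\min\{c_i,m\}=1$. Your per-pair inequality $p\,q_{ik}\le e^{\theta-1}$ is correct but only yields a per-seller bound of $e^{\theta-1}\,|B_r|$; it never yields $C\,c_i$, because when the capacity binds the equilibrium price itself grows like $\ln m$, and that is exactly where the $O(\ln m)$ in the theorem must come from. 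This is how the paper proceeds: it bounds the equilibrium price by $\max\{12,\theta+\ln(m-1)\}$ --- the constant $12$ coming from the consistency assumption $q_{ik}\le 0.91$ in the non-binding case --- and obtains $\mathrm{OPT}\le O(\ln m)\min\{m,\sum_i c_i\}$. Note also that the right aggregate is the global $\min\{m,\sum_i c_i\}$ rather than the per-seller sum $\sum_i\min\{c_i,m\}$, which can overcount by a factor of $n$ since each buyer's total demand across all sellers is at most $1$.

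The second problem is structural: even with a corrected upper bound of the form $O(\ln m)\cdot(\cdot)$, your capacity-bucketing and pigeonhole step spends an \emph{additional} $O(\log m)$ factor, so the plan as written lands at $O(\log^2 m)$ rather than $O(\log m)$. The bucketing is unnecessary. The paper's lower-bound construction --- close in spirit to your $b$-matching idea --- works in one shot: build a flow network with unit-capacity source-to-buyer edges, capacity-$c_i$ seller-to-sink edges, and buyer-seller edges weighted by the revenue at the \emph{fixed} price $p=1$; an integral max-weight flow of value $\min\{m,\sum_i c_i\}$ decomposes into single-seller pools, each flow-carrying edge has weight at least $1/(1+e)$ because $\theta_{ik}\ge 0$, and letting each pool equilibrate can only increase the seller's revenue above its value at $p=1$. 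This gives $\Omega(\min\{m,\sum_i c_i\})$ directly, with no pigeonhole and no need for the closed-form equilibrium analysis or the LP-rounding step you flag as the main obstacle.
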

\begin{proof}
Let $\mathcal{P}$ be an arbitrary pool in the optimal segmentation; $\mathcal{P}$ contains a subset of sellers $N_\mathcal{P}\subseteq [n]$ and a subset of buyers $M_\mathcal{P}\subseteq [m]$. Under Assumption \ref{ass:consistent} and using Theorem \ref{thm:pure}, we know that this pool equilibriates at some prices $p_i, i\in N_\mathcal{P}$ and demands $q_{ik}\in [0, 0.91], i\in N_\mathcal{P}, k\in M_\mathcal{P}$. Now we consider two cases. If $\sum_{k\in M_\mathcal{P}} q_{ik}<c_i$, then at the equilibrium, we have $\frac{\partial}{\partial p_i}(p_i\sum_{k\in M_\mathcal{P}} q_{ik})=0$. Thus, 
\begin{align}\nonumber
p_i=\frac{\sum_{k\in M_\mathcal{P}}q_{ik}}{\sum_{k\in M_\mathcal{P}}(q_{ik}-q^2_{ik})}\leq \frac{\sum_{k\in M_\mathcal{P}}q_{ik}}{\sum_{k\in M_\mathcal{P}}(q_{ik}-0.91q_{ik})}< 12.
\end{align}
If $\sum_{k\in M_\mathcal{P}} q_{ik}\ge c_i$, we can write,
\begin{align}\nonumber
1\!\leq\! c_i\!\leq \!\!\!\sum_{k\in M_\mathcal{P}}\!\!\!q_{ik}&\leq \!\!\!\sum_{k\in M_\mathcal{P}}\!\!\!\frac{\exp(\theta_{ik}-p_i)}{1+\exp(\theta_{ik}-p_i)}\!\leq\! \frac{m\exp(\theta-p_i)}{1+\exp(\theta-p_i)}.
\end{align}      
That shows that $p_i\leq \theta+\ln(m-1)$. Therefore, in either case, we have $p_i\leq \max\{12, \theta+\ln(m-1)\}=O(\ln m)$. Now we can upper-bound the revenue of the optimal pool $\mathcal{P}$ as 
\begin{align}\nonumber
\sum_{i\in N_\mathcal{P}}p_i\big(\sum_{k\in M_\mathcal{P}}q_{ik}\big)&\leq O(\ln m)\sum_{i\in N_\mathcal{P},k\in M_\mathcal{P}}q_{ik}\leq O(\ln m)\cdot \min\{|M_\mathcal{P}|, \sum_{i\in N_\mathcal{P}}c_i\},
\end{align}
where the second inequality is by Corollary \ref{corr:capacity} and the fact that the total demand is at most $|M_\mathcal{P}|$ (as each buyer $k\in M_\mathcal{P}$ contributes at most one unit to the overall demand). Finally, if we sum the above inequality over all optimal pools, the revenue obtained from optimal segmentation is at most 
\begin{align}\nonumber
\sum_{\mathcal{P}}O(\ln m) \min\{|M_\mathcal{P}|, \sum_{i\in N_\mathcal{P}}c_i\}\leq O(\ln m) \min\{\sum_{\mathcal{P}} |M_{\mathcal{P}}|, \sum_{\mathcal{P}}\sum_{i\in N_{\mathcal{P}}}c_i\}=O(\ln m) \min\{m, \sum_{i=1}^n c_i\}.
\end{align}
Thus, we only need to provide a pool partitioning whose revenue is within a constant factor of $\min\{m, \sum_{i=1}^n c_i\}$.

\begin{figure}[t]
\begin{center}
\vspace{-2cm}
\hspace{1cm}
\includegraphics[totalheight=.25\textheight,
width=.35\textwidth,viewport=0 0 850 850]{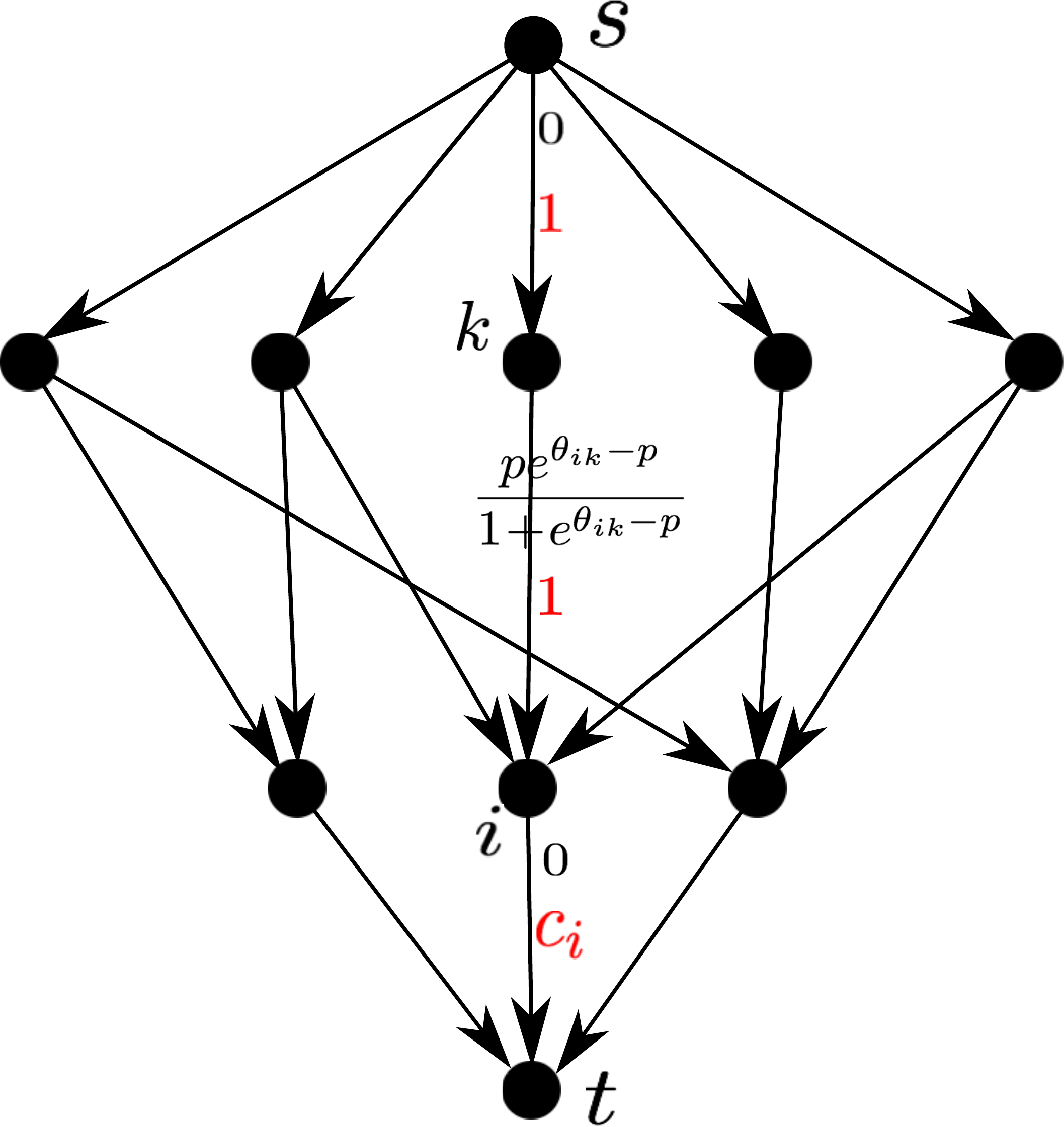} \hspace{0.4in}
\end{center}
\caption{An illustration of the network flow in the proof of Theorem \ref{thm:pool} with $m=5$ and $n=3$. The red and black values next to an edge represent the capacity and the weight of that edge.}\label{fig:flow}
\end{figure}

Consider a directed flow network obtained from the underlying bipartite graph of the generalized Bertrand game and two extra nodes, $s$ and $t$ (see Figure \ref{fig:flow}). Let $s$ be a source node connected to each buyer $k\in [m]$, where the capacity of edge $(s,k)$ is $1$ and its weight equals to $0$. Moreover, let $t$ be a sink node that is connected to each seller $i$ by a directed edge $(i,t)$ with capacity $c_i$ and weight $0$. For every other connected pair of a buyer and a seller $(k,i)$, we set the capacity of the directed edge $(k,i)$ to $1$ and its weight to $p\frac{\exp(\theta_{ik}-p)}{1+\exp(\theta_{ik}-p)}$, where, for the remainder of the proof, we set $p=1$. Let us consider the max-weight flow that sends $\min\{m,\sum_{i=1}^nc_i\}$ units of flow from $s$ to $t$. Since all the edge capacities are integral, the optimal flow is also integral, and assigns each buyer to at most one seller. Therefore, the set of edges that carry positive flow in the optimal flow will decompose the network into pools $\mathcal{P}_i, i\in N\subseteq [n]$, where a pool $\mathcal{P}_i$ comprises exactly one seller $i$ and possibly multiple buyers. In particular, because of the capacity constraints of the edges $(k,t)$, the number of buyers in $\mathcal{P}_i$ is at most $c_i$.

Now suppose that we segment the market into pools $\mathcal{P}_i, i\in N$ that are obtained from the above max-flow solution, and let each one equilibrate at a price $\hat{p}_i, i\in N$. Since $\hat{p}_i$ is the equilibrium price for seller $i$ in $\mathcal{P}_i$, the revenue obtained from $\mathcal{P}_i$ at equilibrium price $\hat{p}_i$ is no less than the revenue obtained from $\mathcal{P}_i$ at the unit price $p_i=1$. However, the revenue at the unit price $p_i=1$ is exactly the weight of the max-flow in $\mathcal{P}_i$. As the argument holds for each pool $\mathcal{P}_i, i\in N$, the overall revenue obtained by segmenting the market into pools $\mathcal{P}_i, i\in N$ is at least the weight of the overall max-flow. Finally, note that every edge $(k,i)$ in the flow network has weight $ \frac{\exp(\theta_{ik}-1)}{1+\exp(\theta_{ik}-1)}\ge \frac{1}{1+e}$. Thus, the weight of the max-flow is at least $\frac{\min\{m,\sum_{i=1}^n c_i\}}{1+e}$. 
\end{proof}

Finally, we note that in the proof of Theorem \ref{thm:pool}, we did not use the full power of the max-flow solution. In other words, we merely used the structure of the optimal flow to assign buyers to different pools while respecting capacity constraints. However, a distinguishing feature of such max-flow partitioning is that its solution incorporates the relative size of quality evaluations $\theta_{ik}$ into partitioning. Unfortunately, because of the highly nonlinear structure of the equilibrium demands and the lack of closed-form solutions for the equilibrium prices, it is not clear how to leverage that advantage to obtain a (possibly) constant-approximation algorithm for the optimal segmentation.

\section{Conclusions}\label{sec:conclusion}
In this paper, we considered an online assortment problem under a discriminatory control model wherein the platform may display only a subset of sellers to an arriving buyer. That situation induces competition among the sellers such that the sellers set their prices based on the Nash equilibrium of a single-buyer Bertrand game. We addressed the problem of finding a competitive online algorithm under inventory constraints with an unknown number of buyers and initial inventories. It is a challenging problem due to the coupling among the revenue of items, the offered assortments, and the inventory constraints. However, we showed that a simple hybrid algorithm achieves a constant competitive ratio and can be implemented in polynomial time $O(mn)$, where $n$ is the number of items and $m$ is the number of buyers. We also showed that the online assortment problem with heterogeneous buyers does not admit a constant competitive algorithm. To account for heterogeneous buyers, we then considered an offline setting in which different buyers can have different evaluations of items' qualities. We showed that under a mild consistency assumption, the offline generalized Bertrand game admits a pure Nash equilibrium, and we devised a simple $O(\ln m)$-approximation algorithm for its optimal segmentation.

This work opens several future directions for research.  For instance, one can consider an online assortment problem in which the information on sellers' inventories is public so that each seller can optimize its price as a function of others' inventory levels.  However,  such an extension could be problem-specific depending on what inventory pricing scheme one would adopt. Moreover,  it would be interesting to see whether the optimal segmentation of the generalized Bertrand game admits a constant factor approximation algorithm. Such an improvement requires a finer characterization of the Nash equilibrium prices in the generalized Bertrand game. Having said that, it may very well be that the approximation algorithm given here is close to optimal. If it is, it would be interesting to establish such a hardness result.

\section{Appendix I: Omitted Proofs}\label{sec:appx}

\smallskip
{\bf Proof of Lemma \ref{lemm:subtitute}}: First note that offering more items reduces the probability that no item will be purchased at the equilibrium, because if $q_0(S)$ denotes the no-purchase probability at the equilibrium, then $1-q_0(S)=\sum_{r\in S}V(q_0(S)e^{\theta_r-1})$, where $V(\cdot)$ is a strictly increasing function. Now if we offer a larger assortment $S\cup\{j\}$, we must have $q_0(S\cup\{j\})\leq q_0(S)$. Otherwise, offering $S\cup\{j\}$ can only increase the right side of the former equality while decreasing its left side, implying that $1-q_0(S\cup\{j\})<\sum_{r\in S\cup\{j\}}V(q_0(S\cup\{j\})e^{\theta_r-1})$. This contradiction shows that $q_0(S\cup\{j\})\leq q_0(S)$. Now, by monotonicity of $V(\cdot)$ and using \eqref{eq:p-q_0}, we have,
\begin{align}\nonumber
q_i(S)=V(q_0(S)e^{\theta_i-1})\ge V(q_0(S\cup\{j\})e^{\theta_i-1})=q_i(S\cup\{j\}).
\end{align}   
In other words, offering more items in the assortment reduces the market share for the existing ones. That also implies that the revenue derived from an item $i$ if it is offered in a larger set is less than when it is offered in a smaller set, as 
\begin{align}\nonumber
R_i(S)=\frac{q_i(S)}{1-q_i(S)}\ge \frac{q_i(S\cup\{j\})}{1-q_i(S\cup\{j\})}=R_i(S\cup\{j\}). 
\end{align}\hfill{$\blacksquare$}

\medskip
{\bf Proof of Lemma \ref{lemm-heavy-two}:} As $i$ is a heavy item, $q_i(\{i\})\ge \lambda$. Now let us first assume $i\in S$, meaning that $i$ is the heaviest item in $S$.

{\bf Case I}: If $q_i(S)\ge \lambda$, by $\sum_{j\in S\cup\{0\}}q_j(S)=1$, we have $\sum_{j\in S\setminus\{i\}}q_j(S)\leq 1-\lambda$. Thus $q_j(S)\leq 1-\lambda \ \forall j\in S\setminus\{i\}$, and we can write
\begin{align}\nonumber
R(S)&=\frac{q_i(S)}{1-q_i(S)}+\sum_{j\in S\setminus \{i\}}\frac{q_j(S)}{1-q_j(S)}\leq \frac{q_i(S)}{1-q_i(S)}+\frac{1-\lambda}{\lambda}\cr 
&\leq \big(1+(\frac{1-\lambda}{\lambda})^2\big)\frac{q_i(S)}{1-q_i(S)} \leq \big(1+(\frac{1-\lambda}{\lambda})^2\big)\frac{q_i(\{i\})}{1-q_i(\{i\})},
\end{align}
where the second inequality holds because $\frac{q_i(S)}{1-q_i(S)}\ge \frac{\lambda}{1-\lambda}$, and the last inequality holds because $q_i(S)\leq q_i(\{i\})$ by the substitutability property (Lemma \ref{lemm:subtitute}).

\noindent
{\bf Case II}: If $q_i(S)< \lambda$, since $i$ is the heaviest item in $S$, for any other $j\in S$ we must have $q_j(S)< \lambda$. As a result,
\begin{align}\label{eq:chain-light}
R(S)=\sum_{j\in S}\frac{q_j(S)}{1-q_j(S)}\leq \frac{1}{1-\lambda}\sum_{j\in S}q_j(S)\leq \frac{1}{1-\lambda}\leq \frac{1}{\lambda}\frac{q_i(\{i\})}{1-q_i(\{i\})},
\end{align} 
where the last inequality holds because $q_i(\{i\})\ge \lambda$.

Finally, if $i\notin S$, then either $S$ does not contain any heavy item, in which case $q_j(S)<\lambda, \forall j\in S$, and the same chain of inequalities in \eqref{eq:chain-light} holds, or $S$ contains at least one heavy item. In the latter case, let $k\in S$ be the heaviest item in $S$. Now, using the proof of Case I, we have $R(S)\leq \big(1+(\frac{1-\lambda}{\lambda})^2\big)\frac{q_k(\{k\})}{1-q_k(\{k\})}$. Since by the assumption, $i$ is heavier than $k$, $q_k(\{k\})\leq q_i(\{i\})$, implying $R(S)\leq \big(1+(\frac{1-\lambda}{\lambda})^2\big)\frac{q_i(\{i\})}{1-q_i(\{i\})}$. Therefore, if we define $f(\lambda)=\max\{1+(\frac{1-\lambda}{\lambda})^2, \frac{1}{\lambda}\}$, both of the above cases hold and we have $R(S)\leq f(\lambda)\frac{q_i(\{i\})}{1-q_i(\{i\})}$.

\medskip
{\bf Proof of Lemma \ref{lemm:basic-Big}:} Consider a virtual online assortment problem with initial inventory $\boldsymbol{c}$ and purchase probabilities $\boldsymbol{q}(S)$ as in the original online assortment problem, except that each item $i$ has a \emph{fixed} constant revenue $r_i$. In other words, the expected revenue obtained by including $i$ in an assortment $S$ equals to $R_i(S)=r_iq_i(S)$. Using an argument similar to that in Section \ref{eq:preliminary}, we find that the optimal clairvoyant online revenue for the virtual problem is upper-bounded by the optimal value of the following LP: 
\begin{align}\label{eq:basic}
\mbox{OPT}(\boldsymbol{r})=\max &\ \ \sum_{t=1}^m\sum_{S} \Big(\sum_i r_iq_i(S)\Big)y^t(S)\cr 
\mbox{s.t.}  &\ \ \sum_{t=1}^m\sum_{S} \boldsymbol{q}(S)y^t(S)\leq \boldsymbol{c},\cr 
& \ \ \boldsymbol{y}^t\in \Delta_{2^n}, \forall t=1,\ldots,m,
\end{align}  
where $\boldsymbol{r}=(r_1,\ldots,r_n)$. It has been shown in \cite[Example 1]{golrezaei2014real} that a greedy online algorithm that at time $t$ offers the maximizing assortment 
\begin{align}\label{eq:greedy-policy}
S_t=\argmax_{S\subseteq \mathcal{A}_t}\sum_{i\in S} r_iq_i(S),
\end{align}
is $\frac{1}{2}$-competitive with respect to OPT$(\boldsymbol{r})$, where $\mathcal{A}_t$ denotes the set of available items at time $t$. Since both the original and virtual assortment problems, as well as their offline LP benchmarks \eqref{eq:LP-Game} and \eqref{eq:basic}, share the same inventory constraints, any feasible online algorithm for one is also feasible for the other one. The only difference is in their expected objective revenues that are given by $\sum_{t=1}^{m}\mathbb{E}\big[\sum_{i\in S_t}\frac{q_i(S_t)}{1-q_i(S_t)}\big]$ and $\sum_{t=1}^{m}\mathbb{E}\big[\sum_{i\in S_t}r_iq_i(S_t)\big]$, respectively. Now, let $\boldsymbol{r}=\boldsymbol{1}$, and denote the expected revenue of the greedy algorithm on the original and virtual problems by $\mbox{Rev}^{\rm o}_{\rm g}$ and $\mbox{Rev}^{\rm v}_{\rm g}(\boldsymbol{1})$, respectively. As $\frac{q_i(S)}{1-q_i(S)}\ge q_i(S), \forall i, S$, a simple coupling shows that
\begin{align}\nonumber
\mbox{Rev}^{\rm o}_{\rm g}\ge \mbox{Rev}^{\rm v}_{\rm g}(\boldsymbol{1})\ge \frac{1}{2} \mbox{OPT}(\boldsymbol{1}).
\end{align}
Finally, when $\boldsymbol{r}=\boldsymbol{1}$, at each time $t$, the greedy rule in \eqref{eq:greedy-policy} offers the assortment
\begin{align}\nonumber
S_t=\argmax_{S\subseteq \mathcal{A}_t}\sum_{i\in S} q_i(S)=\argmax_{S\subseteq \mathcal{A}_t}\Big(1-q_0(S)\Big),
\end{align}  
where by \eqref{eq:q_0}, the right side is maximized if $S_t=\mathcal{A}_t$.\hfill{$\blacksquare$}

\medskip
{\bf Maximizing $g(\lambda)$ in the Proof of Theorem \ref{lemm:basic-Big}:} In order to compute 
\begin{align}\nonumber
g(\lambda)=\min_{y\ge 0}\ \max_{0\leq x\leq \min\{y,1\}}\Big\{\max\{0, 1-\frac{y}{\lambda}\}\big(\frac{1-\lambda}{2}-\frac{x}{f(\lambda)+x}\big)+\frac{(\min\{\lambda,y\}-x)x}{(f(\lambda)+x)(1-x)}\Big\},
\end{align}
we consider two cases: If $y> \lambda$ (i.e., if the number of heavy items $c$ is more than the number of buyers), then $\max\{0, 1-\frac{y}{\lambda}\}=0$, and we get $g(\lambda)=\max\limits_{0\leq x\leq \lambda}\Big\{\frac{(\lambda-x)x}{(f(\lambda)+x)(1-x)}\Big\}$. This maximization can be solved analytically to get
\begin{align}\nonumber
g(\lambda)=\Big(\frac{\sqrt{(1-\lambda)(\lambda+f(\lambda))}-\sqrt{f(\lambda)}}{\sqrt{f(\lambda)(\lambda+f(\lambda))}-\sqrt{1-\lambda}}\Big)^2,
\end{align}  
where $f(\lambda)=\max\{1+(\frac{1-\lambda}{\lambda})^2, \frac{1}{\lambda}\}$. (See the blue curve in Figure \ref{fig:g}.) Otherwise, if $y\leq \lambda$, we have 
\begin{align}\nonumber
g(\lambda)=\min\limits_{0\leq y \leq \lambda}\ \max\limits_{0\leq x\leq y}\Big\{(1-\frac{y}{\lambda})(\frac{1-\lambda}{2}-\frac{x}{f(\lambda)+x})+\frac{(y-x)x}{(f(\lambda)+x)(1-x)}\Big\}.
\end{align}
This function is computed numerically and is depicted in Figure \ref{fig:g} using a red curve. The lower envelope of these two functions determines a lower bound for the competitive ratio of Algorithm \ref{alg-main} for different values of $\lambda$. Finally, by maximizing that lower envelope over $\lambda$, one can see that both curves achieve a competitive ratio of at least $0.057$. \hfill{$\blacksquare$}  
\begin{figure}[t]
\vspace{-1.5cm}
\begin{center}
\includegraphics[totalheight=.25\textheight,
width=.35\textwidth,viewport=0 0 500 500]{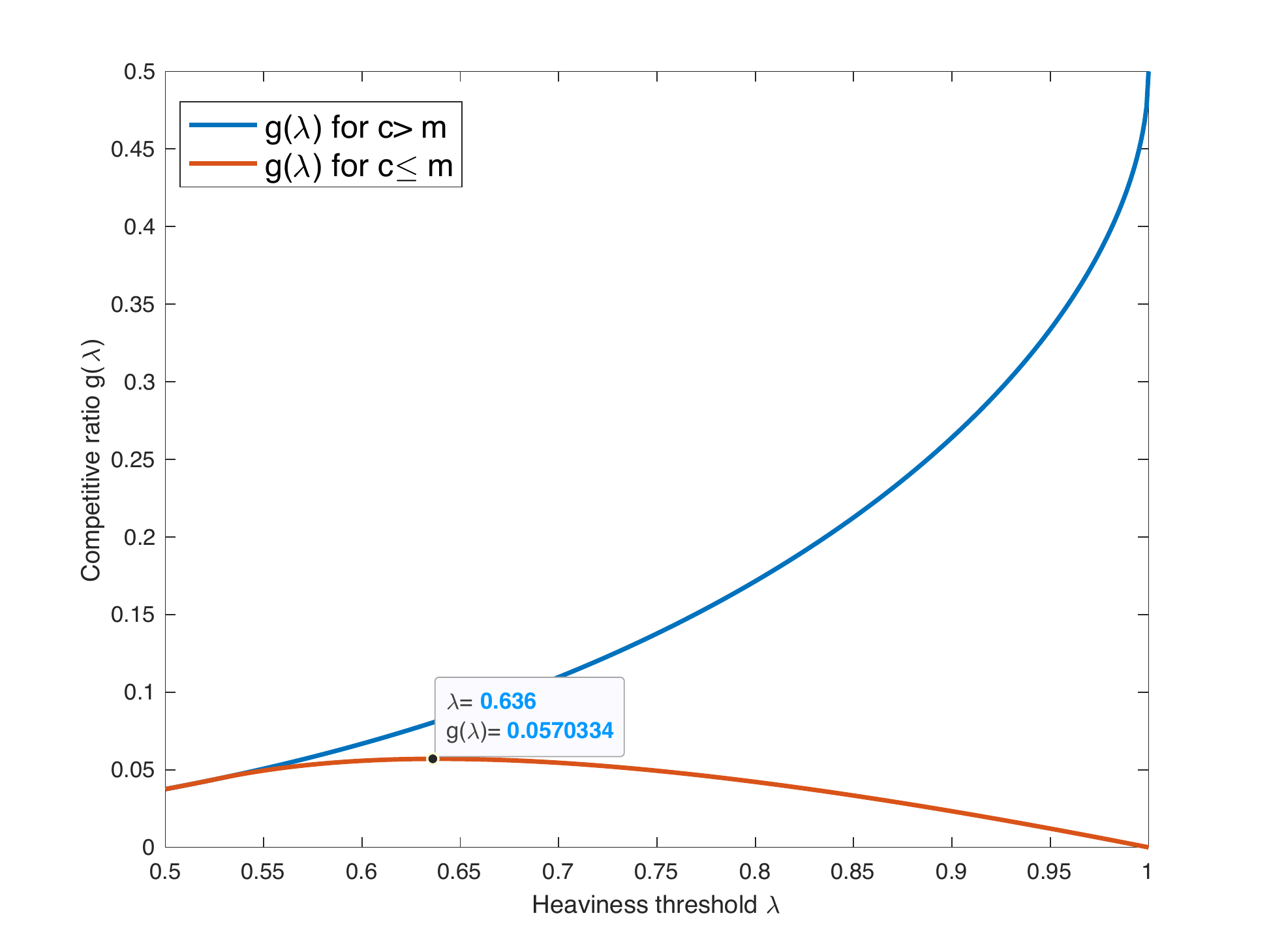} \hspace{0.4in}
\end{center}\vspace{-0.3cm}
\caption{Competitive ratio function $g(\lambda)$ for two different cases of $c> m$ and $c\leq m$.}\label{fig:g}
\end{figure}
 
\section{Appendix II: Numerical Experiments}\label{sec:sim}
In this section, we provide the results of some numerical experiments to demonstrate the efficiency of the hybrid Algorithm \ref{alg-main} beyond the theoretical guarantee given in Theorem \ref{them:online}. In our simulations, we consider a set of $n=10$ items with qualities $\theta_{10}=-2, \theta_9=-1.5, \theta_8=-1, \theta_7=-0.5, \theta_6=0.5, \theta_5=1, \theta_4=1.5, \theta_3=2, \theta_2=2.5, \theta_1=3. $, and inventory levels $c_i=15, \forall i\in[10]$. In each figure, we evaluate the competitive ratio of the hybrid Algorithm \ref{alg-main} by changing only one parameter while keeping all other parameters fixed. More precisely, we consider the competitive ratio $\alpha=\frac{\mathbb{E}[R({\rm Alg})]}{{\rm OPT}}$ under three different scenarios. In the left side of Figure \ref{fig-1}, we have changed the number of buyers from $m=100$ to $m=500$ while setting the heaviness threshold to $\lambda=0.5$. As can be seen, the competitive ratio increases as the number of items increases, and it approaches $1$ for a large number of buyers. The reason is that for a large $m$, the Algorithm \ref{alg-main} has enough time to sell each item at its maximum price by offering it alone. On the right side of Figure \ref{fig-1}, we again set $\lambda=0.5$ and increase the inventory of the first three items from $1$ to $30$ (while keeping all others' inventories fixed). Finally, in Figure \ref{fig-2}, we have illustrated the effect of change of heaviness threshold $\lambda$ in the competitive ratio. As $\lambda$ changes from $\lambda=0.5$ to $\lambda=0.9$, the competitive ratio changes between $0.2$ to $0.37$, with its maximum value achieved for the threshold $\lambda=0.63$. While that optimal threshold matches our theoretical analysis, however, it is worth noting that in this numerical experiment, the competitive ratio for $\lambda=0.63$ is at least $\alpha\ge 0.37$, which is pretty good and higher than the theoretical guarantee $\alpha\ge 0.057$. In summary, the competitive ratio of the hybrid Algorithm \ref{alg-main}, in general, has a complicated nonlinear dependence on each of the problem parameters. However, as can be seen from all the figures, $\alpha\ge 0.2$ for $\lambda=0.5$, and $\alpha\ge 0.37$ for $\lambda=0.63$. In particular, the competitive ratio can be close to $1$ for a certain range of parameters. 

\begin{figure}[!tbp]
  \centering
  \hspace{-0.3cm}\begin{minipage}[t]{0.38\textwidth}
    \includegraphics[width=\textwidth]{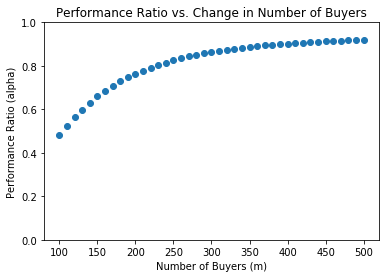}
  \end{minipage}
  \hspace{2.1cm}
  \begin{minipage}[t]{0.38\textwidth}
    \includegraphics[width=\textwidth]{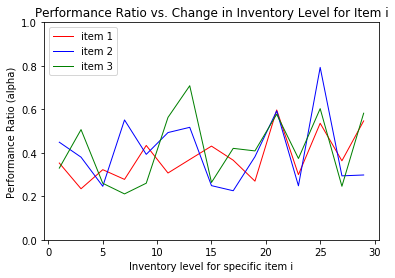}
  \end{minipage} \caption{\footnotesize{Competitive ratio of the hybrid Algorithm \ref{alg-main} with $\lambda=0.5$ v.s. change in the number of buyers (left figure), and change in the inventory levels (right figure).}}\label{fig-1}
\end{figure}

\begin{figure}[!tbp]
  \centering
  \begin{minipage}[t]{0.39\textwidth}
    \includegraphics[width=\textwidth]{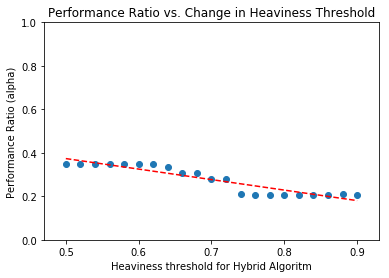}
    \vspace{-0.3cm}
    \caption{\footnotesize{Competitive ratio v.s. change in the heaviness threshold of the hybrid algorithm.}}\label{fig-2}
  \end{minipage}
  \hspace{2cm}
  \begin{minipage}[t]{0.41\textwidth}
    \includegraphics[width=\textwidth]{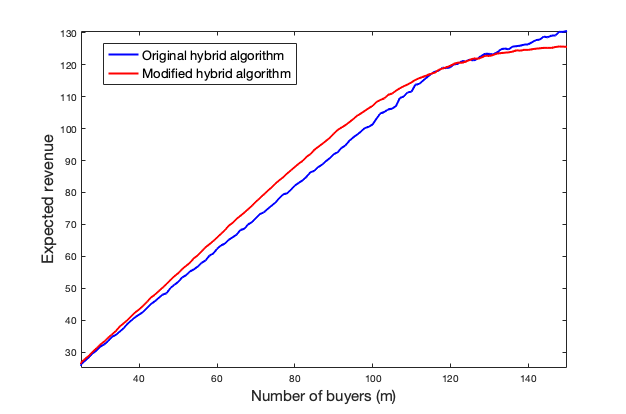}
    \vspace{-0.4cm}
    \caption{\footnotesize{Expected revenue v.s. number of buyers for the hybrid and modified hybrid algorithms.}}\label{fig-3}
  \end{minipage} 
  \vspace{-0.5cm}
\end{figure}


\subsection{Improvement Using Information on Sellers' Inventories} In our work, we assumed that the platform does not have access to the information on sellers' inventories. That not only respects the sellers' privacy but also having access to such information may not be feasible in some inventory management problems \cite{wang2019inventory} (e.g., due to inaccuracies in inventory recording, misplaced products, market fluctuation, etc.).  However, the online hybrid algorithm achieves a constant competitive ratio even under this restricted setting, and so our results are robust if the platform has more information about the system. In particular, the platform can only benefit from that extra information to improve its performance. Here, we illustrate numerically how extra information on sellers' inventories can help the platform improve its performance. To that aim, we consider a modified version of the online hybrid algorithm, which also incorporates the information on the sellers' inventory states into its assortment recommendations.  The modified hybrid algorithm dynamically scales the heaviness of the items according to the remaining inventories and then offers the assortments based on the relative heaviness of the items.  More precisely, let $\Psi:[0,1]\to [0,1]$ be a non-decreasing weight function and let us denote the inventory level of seller $i$ at time $t$ by $I_i(t)$. Then we can define the relative heaviness of item $i$ at time $t$ by $\Psi(\frac{I_i^t}{c_i})q_i(\{i\})$, where we recall that $c_i$ is the initial inventory of item $i$, and $q_i(\{i\})$ is the equilibrium demand for item $i$ if it is offered alone in an assortment. Now, if an item $i$ has low inventory,  the weighting factor $\Psi(\frac{I_i^t}{c_i})$ would be small so that the modified hybrid algorithm gives less priority to low inventory items in order to save them for future sales. Consequently, the algorithm offers the items with larger inventory more aggressively. 

We have simulated the performance of the modified hybrid algorithm with exponential weight function $\Psi(x)=\frac{e}{e-1}(1-e^{-x})$ in Figure \ref{fig-3}.\footnote{This choice of weight function is motivated by the inventory balancing algorithm given in \cite{golrezaei2014real}.} We consider $n=10$ items with qualities $\boldsymbol{\theta}=(2.1, 2,2,2,2,0.5,-0.5,-1,-1.5,-2)$. We also set the initial inventories to $\boldsymbol{c}=(20,20,20,20,20,5,5,5,5,5)$ and the heaviness threshold to $\lambda=\frac{1}{2}$.  The expected revenue of the hybrid algorithm (blue curve) and its modified version (red curve) as the number of buyers changes from $m=25$ to $m=150$ are shown in Figure \ref{fig-3}. As can be seen, for the mid-range number of buyers, the modified algorithm with inventory balancing weight function indeed outperforms the original hybrid algorithm. However, for many buyers, the platform has enough time to sell the items individually at their highest price, and hence the original hybrid algorithm that is oblivious to remaining inventories performs better.

\bibliographystyle{IEEEtran}
\bibliography{thesisrefs}
\end{document}